\newcommand{\Comment}[1]{{\hskip3em$\rightarrow$/* #1*/}}
\pgfplotsset{width=10cm,compat=1.9, label style={font=\large},
                    tick label style={font=\large}  }
\newcommand{\seed}{\textit{MSSD}}
\newcommand{\un}{\textit{SI}}
\newcommand{\general}{\textit{MSSD}}
\newcommand{\den}{\textit{MSSD-D}}
\newcommand{\nn}{\textit{MSSD-N}} 
\newcommand{\rec}{\textit{MSSD-R}}
\newcommand{\opt}{\textit{MSSD*}}
\newcommand{\fix}{\textit{MSSD-F}}
\newtheorem{definition}{Definition}
\newtheorem{theorem}{Theorem}
\newtheorem{assumption}{Assumption}
\newcounter{parentalgorithm}
\newenvironment{subalgorithms}{%

  \refstepcounter{algorithm}%

  \protected@edef\theparentalgorithm{\thealgorithm}%
  \setcounter{parentalgorithm}{\value{algorithm}}%

  \setcounter{algorithm}{-1}%
  \def\thealgorithm{\theparentalgorithm\alph{algorithm}}%
  \ignorespaces
}{%
  \setcounter{algorithm}{\value{parentalgorithm}}%
  \ignorespacesafterend
}
\xpatchcmd{\algorithmic}{1.2em}{1.5em}{}{}% Increase label width
\newcommand{\updatelinenoprint}{%
  \setcounter{parentcounter}{\value{ALC@line}}% Store current line number
  \setcounter{ALC@line}{0}% Restart line counter
  \renewcommand{\theALC@line}{\theparentcounter.\alph{ALC@line}}% Update printing mechanism
}
\renewcommand{\theALC@line}{\arabic{ALC@line}}
\newcounter{parentcounter}
\def\@copyrightspace{\relax}
\begin{document}

% ****************** TITLE ****************************************

\title{Seed-Driven Geo-Social Data Extraction}

% possible, but not really needed or used for PVLDB:
%\subtitle{[Extended Abstract]
%\titlenote{A full version of this paper is available as\textit{Author's Guide to Preparing ACM SIG Proceedings Using \LaTeX$2_\epsilon$\ and BibTeX} at \texttt{www.acm.org/eaddress.htm}}}

% ****************** AUTHORS **************************************

% You need the command \numberofauthors to handle the 'placement
% and alignment' of the authors beneath the title.
%
% For aesthetic reasons, we recommend 'three authors at a time'
% i.e. three 'name/affiliation blocks' be placed beneath the title.
%
% NOTE: You are NOT restricted in how many 'rows' of
% "name/affiliations" may appear. We just ask that you restrict
% the number of 'columns' to three.
%
% Because of the available 'opening page real-estate'
% we ask you to refrain from putting more than six authors
% (two rows with three columns) beneath the article title.
% More than six makes the first-page appear very cluttered indeed.
%
% Use the \alignauthor commands to handle the names
% and affiliations for an 'aesthetic maximum' of six authors.
% Add names, affiliations, addresses for
% the seventh etc. author(s) as the argument for the
% \additionalauthors command.
% These 'additional authors' will be output/set for you
% without further effort on your part as the last section in
% the body of your article BEFORE References or any Appendices.

\numberofauthors{2} %  in this sample file, there are a *total*
% of EIGHT authors. SIX appear on the 'first-page' (for formatting
% reasons) and the remaining two appear in the \additionalauthors section.

\author{
\alignauthor
Suela Isaj\\
 \affaddr{Department of Computer Science}\\
       \affaddr{Aalborg University}\\
       \email{suela@cs.aau.dk}
% 2nd. author
\alignauthor
Torben Bach Pedersen\\
 \affaddr{Department of Computer Science}\\
       \affaddr{Aalborg University}\\
       \email{tbp@cs.aau.dk}}
% There's nothing stopping you putting the seventh, eighth, etc.
% author on the opening page (as the 'third row') but we ask,
% for aesthetic reasons that you place these 'additional authors'
% in the \additional authors block, viz.
% Just remember to make sure that the TOTAL number of authors
% is the number that will appear on the first page PLUS the
% number that will appear in the \additionalauthors section.

\maketitle

\begin{abstract}
Geo-social data has been an attractive source for a variety of problems such as mining mobility patterns, link prediction, location recommendation, and influence maximization. However, new geo-social data is increasingly unavailable and suffers several limitations. In this paper, we aim to remedy the problem of effective data extraction from geo-social data sources. We first identify and categorize the limitations of extracting geo-social data. In order to overcome the limitations, we propose a novel seed-driven approach that uses the points of one source as the seed to feed as queries for the others. We additionally handle differences between, and dynamics within the sources by proposing three variants for optimizing search radius. Furthermore, we provide an optimization based on recursive clustering to minimize the number of requests and an adaptive procedure to learn the specific data distribution of each source. Our comprehensive experiments with six popular sources show that our seed-driven approach yields 14.3 times more data overall, while our request-optimized algorithm retrieves up to 95\% of the data with less than 16\% of the requests. Thus, our proposed seed-driven approach set new standards for effective and efficient extraction of geo-social data.
\end{abstract}

\section{Introduction}
Each year social networks experience a continuous growth of 13\% in the number of users (http://wearesocial.com/uk /blog/2018/01/ global-digital-report-2018). Consequently, more and more information is available regarding the activity that the users share, events in which they participate and the new connections they make.  When data collected by social networks contain social connections (friendship links, mentions and tags in posts etc) as well as geographic information (check-ins, geo-data in posts and implicit location detection), then this data is usually referred as \emph{geo-social data}. \emph{Geo-social data} have attracted studies regarding location prediction, location recommendation, location-based advertisement, urban behavior etc. 

The primary sources of geo-social data are \emph{location-based social networks} (LBSNs) such as Gowalla, Brightkite, and Foursquare, which contain social ties, check-ins, tips and detailed information about locations. However, Gowalla and Brightkite were closed in 2012 whereas Foursquare has blocked the extraction of check-ins from its API (Application Programming Interface - set of functions and procedures that allow data extraction from a source). Other secondary sources of geo-social data are \emph{social networks} such as Facebook, Twitter, Flickr etc. Social networks are characterized by richness and variety of data, making them an attractive source for data extraction. However, the percentage of geo-located posts reported in the literature is less than 1\% (\cite{cheng2010you,jurgens2013s,li2013spatial,morstatter2013sample}). Furthermore, they provide rich information about users, their networks, their activities but only few details about locations (only the coordinates).  Another less common source of location data (not necessarily geo-social) are \emph{directories} such as Yelp, Google Places, TripAdvisor etc, which contain  locations with details such as name, phone, type of business, etc and sometimes accompanied by user reviews. In the majority of the cases, directories do not contain user profiles; even when they do, the API does not provide functions to extract user's information. To sum up, in order to obtain geo-social data, it is necessary to use several sources in order to gain a complete dataset.

Not only is geo-social data scattered over several sources but the APIs of the sources are also highly restrictive.  The APIs have restrictions regarding the number of requests that can be made within a time frame, the amount and the type of data that can be extracted, etc. The data extraction is a time-consuming procedure that might need months to get a relevant dataset. Instead of extracting the data, publicly available datasets can be used.
%such as Stanford Large Network Dataset Collection (http://snap.stanfor
%d.edu/data/index.html), Social Computing Data Repository at ASU (http://socialcomputing.asu.edu/pages/datasets, 
%Ko-blenz network collection (http://konect.uni-koblenz.de/netw
%orks), etc.
However, these datasets sometimes lack the details about users' profiles or the locations, and the check-ins/photos/posts/reviews might be sparse and scattered all over the globe. 
Enriching these datasets with the missing details is not possible because the data is anonymized so the link with the source is lost. Even when the data is not anonymized, the datasets are old (2008-2013) and they can not map to the existing users or locations of nowadays. When we analyzed 32 papers from 2009 to 2018 using geo-social data, we found that no less than 50\%  used datasets that  are 3-8 years older than the published article (see Section A in Appendix).
While in some cases these datasets can be quite suitable for the purpose, e.g. when the only structure of the network is analyzed, in other cases, they might not be of help. For example, some recommender systems and location prediction works need details about the users (age, location, posts) and the locations (semantics, rating). When the research is related to frequent patterns, mobility patterns, urban behavior, the sparsity of the activity of users such as check-ins or geo-tagged photos and texts affects the quality of the experiments.

To sum up, geo-social data is becoming even more needed and even less accessible. We thus, address the problem of \emph{location-based geo-social data} extraction from social networks and location-based sources. We introduce and quantify the limitations of six sources of geo-social data, namely: Flickr, Twitter, Foursquare, Google Places, Yelp, and Krak. Then, we propose a seed-driven algorithm that uses the points of the richest source (the \emph{seed}) to extract data from the others. Later, we introduce techniques to adapt the radius of the query to the searched area and to minimize the number of requests by using a recursive clustering method. 

Our main contributions are: (i) We provide an in-depth analysis of the current limitations of data extraction from six popular geo-social data sources. (ii) We identify and formulate the problem of \emph{maximizing the extracted geo-social data} while \emph{minimizing the requests} to the sources. To the best of our knowledge, we are the first to optimize the data extraction process in social networks and location-based sources. (iii) We propose a novel algorithm for data extraction that uses the points of one source as \emph{seed} to the API requests of the others. Our \emph{seed-driven} algorithm retrieves up to 98 times more data than the default API querying. (iv) We introduce an \emph{optimized} version of our algorithm that minimizes the requests by clustering the points and ensures maximized data extraction by (i) a recursive splitting of clustering depending on the data distribution of the source and (ii) a recursive establishment of the search radius  depending on the density of the area. We are able to retrieve around 90\% of the data using less than 16\% of the requests.

The remainder of the paper is structured as follows: first, we describe the related work in Section 2; then, we introduce the definitions and the data extraction problem in Section 3; later, we categorize the limitations of the data extraction process and we provide preliminary results from six sources in Section 4; we continue with formalizing our proposed algorithm in Section 5; next, we test the proposed solutions through real-time querying of the sources and we compare the results in Section 6; and finally, we conclude and provide further insights on our work in Section 7.

\section{Related work}
\label{sec:relatedwork}
Despite the growing interest in geo-social related topics, the existing related work does not focus specifically on optimizing the data extraction process. Most of the existing research uses either publicly available datasets \cite{wang2013location,emrich2014geo,li2014efficient,yin2016discovering,cai2016using,saleem2017location,yu2017friend,li2017geo,zhu2017geo,zhang2013igslr,ference2013location,saleem2018effective,li2017geo,zhao2016stellar,feng2015personalized,yao2016poi}, crawl using the default settings of the API \cite{cho2011friendship,gao2013exploring,jurgens2015geolocation,li2009analysis,noulas2012random,zhang2012evaluating,weiler2015geo,scellato2011exploiting,burini2018urban} or do both \cite{ijcai2017-314,liu2017experimental}. The (sparsely described) crawling methods used in these papers can be categorized as either \emph{user-based} crawling or \emph{location-based} crawling. The former extracts data from users by crawling their network. The latter uses queries with a specific location. Another method of querying [not necessarily crawling] is \emph{keyword-based} querying, widely used by the research on topic mining, opinion mining, the reputation of entities, quality of samples and several related topics \cite{bennacer2017interpreting, kwak2010twitter, morstatter2013sample} but not in research on geo-social topics.

\textbf{User-based crawling}. 
User-based crawling is based on querying \emph{users} for their data and their networks as well. 
A user-based crawling technique mentioned in several studies is the \emph{Snowball} technique \cite{gao2015content,scellato2010distance}. \emph{Snowball} requires a prior \emph{seed} of users to start with and then, traverses the network while extracting data from the users considered in the current round. Nonetheless, \emph{Snowball} is biased to the high degree nodes \cite{lee2006statistical} and requires a well-selected seed. 
Another interesting method is to track the users that post with \emph{linked accounts} \cite{hristova2016measuring,armenatzoglou2013general,preoctiuc2013mining}, for instance, users posting from Twitter using the check-in feature of Foursquare. This method allows obtaining richer datasets than a single social network. Nevertheless, it is limited only to linked accounts (whose percentage is less than 1\%) and requires additional filtering if we are interested only in a specific area, resulting in wasted requests.

\textbf{Location-based crawling}. Location-based crawling requires no prior knowledge and the extraction process can start at any time. It is based on extracting data near or within a specific area. Lee et al. \cite{lee2010measuring} use a  periodical querying based on points extracted from Twitter. First, Twitter is queried for initial points. Then, in a later step, other requests are performed using the initial points as query points, focusing on areas detected by the user. Thus, in each step $n$, the points discovered in step $n-1$ are used to perform the new queries. We will refer to this method as \emph{Self-seed}.

\textbf{Keyword-based querying}. As the name suggest, the source is queried with a keyword to find relevant data. The keyword-based querying is not directly applicable for geo-located information since querying with a keyword does not guarantee that the retrieved data will be located in the location mentioned by the keyword. For example, querying Twitter with the keyword "Brussels" can return tweets in Brussels, tweets talking about Brussels but not located there, and even tweets about brussels sprouts. Moreover, this method needs a good selection of keywords regarding the area of interest (names of cities, municipalities etc), otherwise it is bound to retrieve scarce data.

\textbf{Discussion.} Obviously, the keyword-based querying is not of interest due to the noise it brings and thus, resulting in wasted requests. The user-based crawling requires prior information about a seed of users and is applicable only to social networks.  Subsequently, they leave aside other location-based sources such as Yelp, TripAdvisor and Google Places, which provide interesting information about the locations in geo-social data. Moreover, if the study is based on a region of interest, the user-based crawling results in a lot of irrelevant data because even if the seed of users is well-selected from the region of interest, there is no guarantee that the friends will contain check-ins in the area of interest. Consequently, user-based crawling produces wasted requests. The method described by \cite{lee2010measuring} has some similarities with ours because it is location-based crawling and focuses on performing requests on areas discovered previously. In comparison, our approach differs significantly because (i) instead of selecting points from a single source and querying itself, we use a geographically rich \emph{seed} to query \emph{multiple sources}, (ii) we minimize the number of requests performed while maximizing the data extracted  (iii) our seed-driven data extraction approach does not need periodical querying; it can be run continuously and simultaneously for all the sources, resulting in faster data extraction process, compared to several months like in \cite{lee2010measuring}. \emph{ To sum up, our data extraction approach is faster, richer, request-economic and includes multiple sources}.

\section{Problem definition}
%Geo-social data originates from different types of sources which could be either \emph{directories} or \emph{social networks}. The former [directories] resemble a catalog with well-defined locations, sometimes accompanied by ratings and reviews (Yelp, Google Places, The Fork). The latter [social networks] are crowd-sourced geo-social data, which means that the user pins a point on the map and associates it with his activity. 

Extracting geo-social data from a region of interest can be done by directly querying to get data in that region with location-based queries. The notion that we will use widely in the paper is the notion of a \emph{location}. A location in a directory is a venue with a geographical point and additional attributes like name, opening hours, category etc. However, social networks contain \emph{activities} such as check-ins, tips, photos and tweets which are geo-tagged. We denote the locations associated with the activities as \emph{derived locations}. For brevity, from now on, both locations from directories and derived locations from social networks are referred to as \emph{locations}.

\begin{definition}{
\label{def:activity}
A location $l$ is a  spatial entity identified within the source by a unique identifier $id(l)$. A location $l$ has a set of attributes $A=\{a_1, a_2 ... a_n\}$ accompanied by their values $\{a_1(l), a_2(l) ... a_n(l)\}$}.
\end{definition}

The id of a location ($id(l)$) is unique within the source. For example, $l_1$ is a tweet with id = 1234567, where $A$=\{text, user, point\} and the values are
\{"Nice day in park", 58302, <57.04, 9.91>\}. A required attribute for a location is its geographical coordinates denoted as $p(l)$.

Geo-social data sources usually offer an \emph{Application Programming Interface (API)}, which is a set of functions and procedures that allow accessing a source to obtain its data. Location-based API calls allow querying with (i) a point $p$ and a radius $r$, (ii) a box $<p_1, p_2, p_3, p_4>$ and (iii) keywords. We will not consider keyword-based querying due to the noise it bring (see Section \ref{sec:relatedwork}). %It is important to note here that not all social networks offer keyword-based API calls. Moreover, since querying with the keyword is not location-based, the results might fall outside the region of interest. For example, the tweets retrieved from querying  Twitter with the keyword "Aalborg" might not fall within Aalborg but rather mention Aalborg in the text. As a result, the keyword-based querying is not reliable for retrieving data within a region of interest. 
The circular querying (i) and the rectangular querying (ii) are quite similar as long as the parameters are the same. We focus mostly on querying with a point $p$ and a radius $r$ due to the fact that most of our locations are identified by points. We will refer to the searched area as $Circle (p,r)$. We define a geo-social data source formally as:

\begin{definition}
A geo-social data source $S$, short as \emph{source}, consists of the (complete) set of locations $L(S)$ and a source-specific extraction function $API$: $\mathcal{P}x\mathcal{R}^+\Rightarrow 2^{L(S)}$, where $\mathcal{P}$ is the domain of geographical points and $\mathcal{R}^+$ is the domain of non-negative numbers. $API(p,r)$  queries with a centroid $p \in \mathcal{P}$ and a radius $r\in \mathcal{R}$ and returns a sample of locations $L_p^r$, such that for each $l \in L_p^r$, $p(l) \in Circle (p,r)$ and $|L_p^r|\leq M_{S}$, where $M_{S}$ is the maximal result size for  $S$. 
\end{definition}

For instance, if $S$ is Twitter, then $L(S)$ is the complete set of tweets  (all the tweets posted ever on Twitter). We can query Twitter using the API with a point and a radius $<p,r>$ and let $L_p^r$ be the result of the query. $L_p^r$ is a \emph{sample} of size at most $M_S$ of the underlying activities $L_p^r(S)$ in $Circle(p,r)$. So, if $M_S=100$, then the size of $L_p^r$ is not larger than 100. 
Let us now suppose that we have a budget of only $n$ API calls. These calls need to be used wisely in order to retrieve the largest combined result size. Moreover, each query result should contribute with new locations. For example, if the first request retrieves the locations $\{l_1, l_4, l_5, l_6\}$ and the second request retrieves $\{l_2, l_4, l_5, l_6\}$, then the second request only contributed with one new location ($l_2$).
Hence, we need to find which pair of point and radius $<p,r>$ would provide the largest sample with the highest number of new locations.

\emph{Problem definition}.
Optimizing geo-social data extraction is the problem that given a source $S_i$ and a maximum number of requests $n$ finds the sequence of pairs of point and radius $\{<p_1, r_1>, <p_2, r_2> ... <p_{n}, r_{n}>\}$ such that the size of $L_i=\bigcup\limits_{j=1}^{n} L_{p_j}^{r_j}$ is maximized .

The problem aims to obtain a good compromise between the number of requests $n$ and the number of locations $L_i$. Additionally, the problem intends to minimize the intersection of the results $\bigcap\limits_{j=1}^{n} L_{p_j}^{r_j}$ as well. The solution to our data extraction problem is a combination of $\{<p_1^*, r_1^*>, <p_2^*, r_2^*> ... <p_n^*, r_n^*>\}$ such that  $L_i$ is maximal. Let us denote as $L_i^*$ the result of the \emph{optimal solution} to our problem ($L_i^* \subseteq L(S_i)$).   There might be several combinations of points and radii that retrieve a specific number of locations. In order to find $L_i^*$, we have to try exhaustively all possible values and combinations of $p$ and $r$ and rank the retrieved results. Given that we have a budget of $n$ requests, finding the optimal pairs of $<p_j,r_j>$ is not feasible. Hence, we will have to propose solutions that are based on heuristics and assumptions.
But, before proposing our solutions, let us first study the limitations of the APIs for each of the sources.

\section{Limitations of existing \\ geo-social data sources}
 \label{sec:limitations}

\subsection{API Limitations}

\begin{table*}[htb]
\centering
\scriptsize
\begin{tabular}{@{}lllllll@{}}
\toprule
\textbf{API limitations}      & \textbf{Krak}               & \textbf{Yelp}   & \textbf{Google Places}  & \textbf{Foursquare}  & \textbf{Twitter}  & \textbf{Flickr}                  \\ \midrule
\textbf{Bandwidth}  & 10K in 1 month   &  5K in 1 day & 1 in 1 day (from 6/2018) & 550  in 1 hour & 180 in 15 min  & 3.6K in  1 hour 
 \\ \hline
 
\textbf{Max Result Size}& 100  & 50  & 20  & 50  & 100  & 500 
  \\ \hline
  
  \textbf{Historical Access} & N/A & N/A & N/A & Full & 2 weeks  & Full \\  
  
  \hline
  
  \textbf{Supp Results} & 4.3\% & 17.3\% & 0.5\% & 0.0\% & 0.0\% &  0.0\%
  \\ \hline
  
   \textbf{Complete access} & yes  & yes & yes & yes & 1\% & yes 
  \\ \hline
  
    \textbf{Cost} & not stated  & negotiable & from 200\$/month & from 599\$/month & 149\$ - 2499\$/month & not stated  
  \\ \hline

  \bottomrule
\end{tabular}

\caption{Summary of limitations of social networks}
\label{tab:categories}
\end{table*}

We introduce the API limitations, which indicate the technical limitations of extracting locations from geo-social sources and provide some analysis of the spatial and temporal density of the extracted data.

\emph{Data acquisition}: 
With regard to quantifying the limitations, we present preliminary results from querying six sources: Twitter, Flickr, Foursquare, Yelp, Google Places, and Krak.  Krak is a Danish website that offers information about companies, telephone numbers, etc. In addition, Krak is part of Eniro Danmark A / S. which takes care of publishing The Yellow Pages. We queried all the sources simultaneously for the region of North Denmark during November-December 2017. With respect to gaining more data, we performed additional requests using different keywords ("restaurant", "library", "cozy", etc) as well as coordinates of the cities and towns in the region.

\begin{figure*}[htb]
 \centering 
 \begin{minipage}{.35\textwidth} 
 \centering  \includegraphics[width=\textwidth]{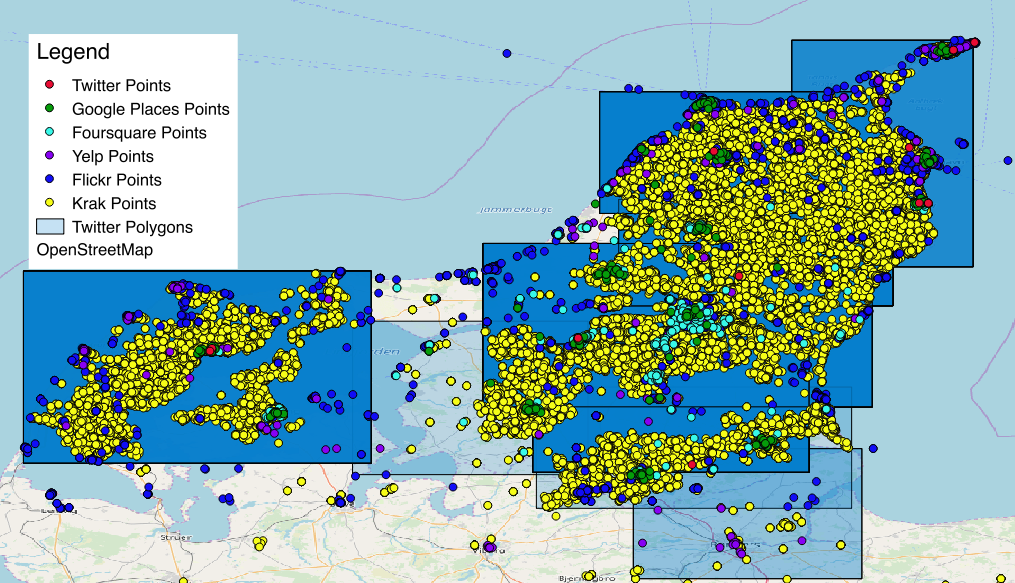}
    \caption{Geographical spread of locations}
    \label{fig:map}
 \end{minipage}%
 \begin{minipage}{.65\textwidth}
\begin{subfigure}[b]{0.33\linewidth}
    \centering
  \includegraphics[width=\linewidth]{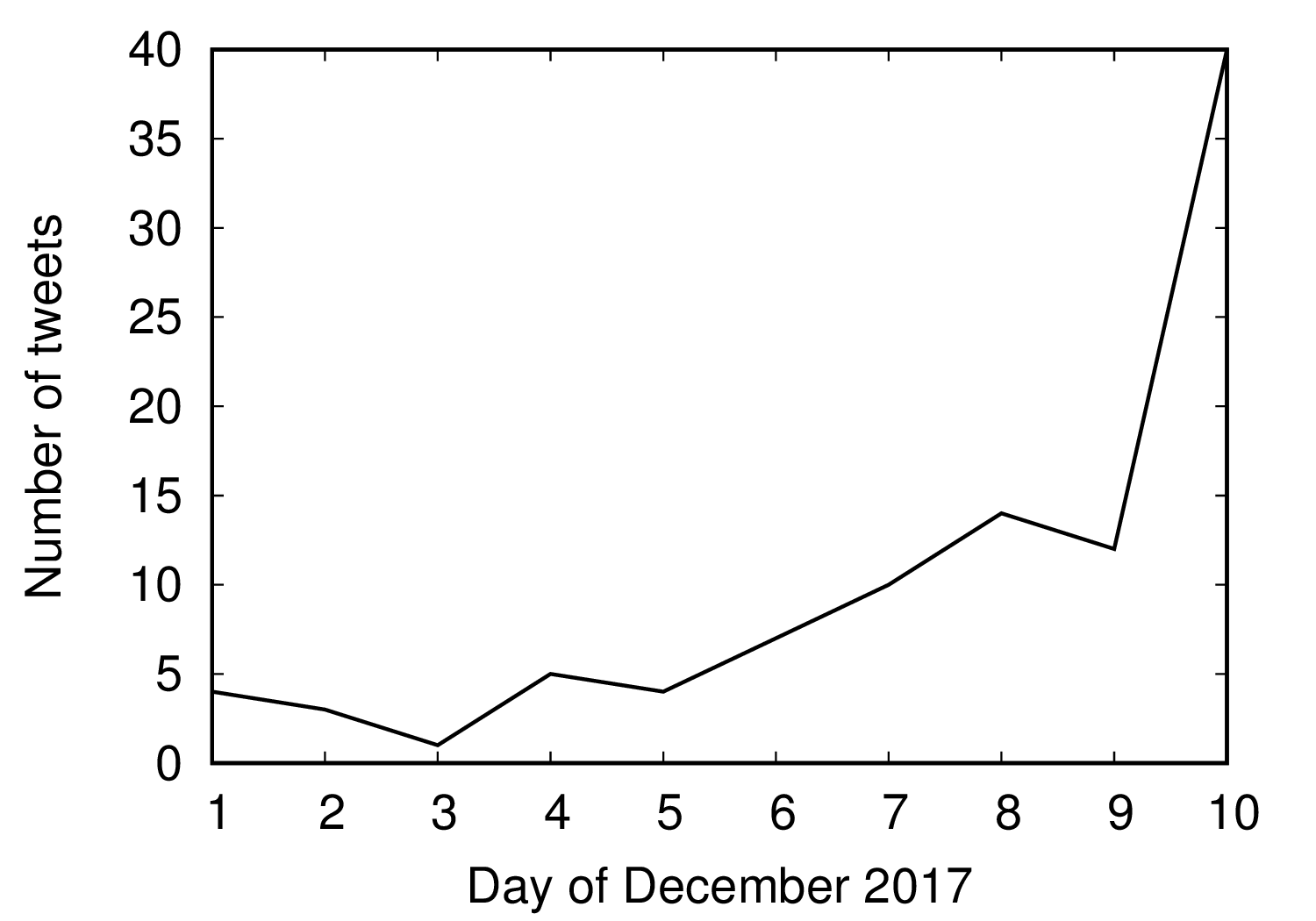}
    \caption{Twitter} 
    \label{fig:temporalTwitter}
    \end{subfigure}%%
  \begin{subfigure}[b]{0.33\linewidth}
    \centering
 \includegraphics[width=\linewidth]{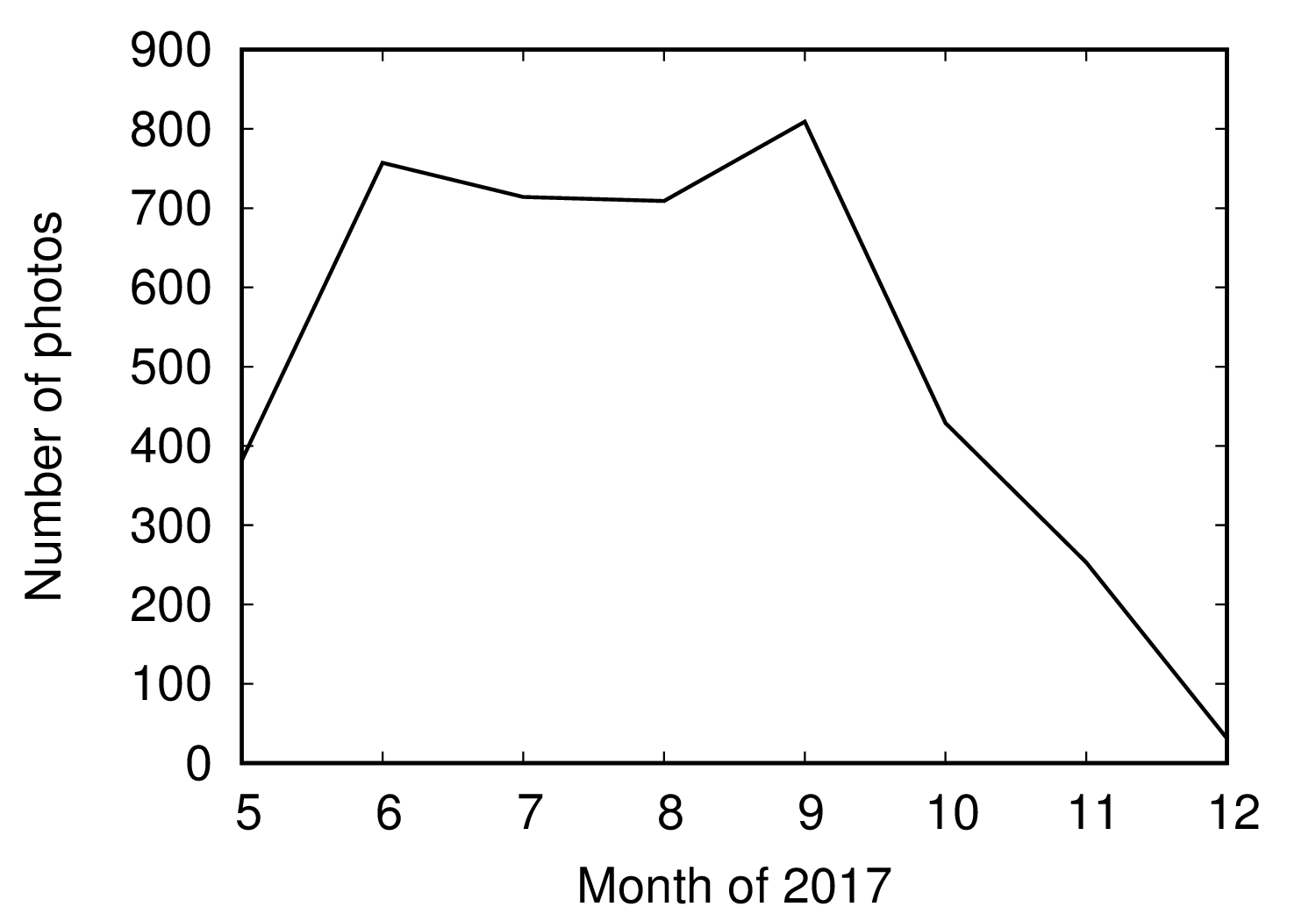}
    \caption{Flickr} 
   \label{fig:temporalFlickr}
   \end{subfigure} %%
  \begin{subfigure}[b]{0.33\linewidth}
    \centering
    \includegraphics[width=\linewidth]{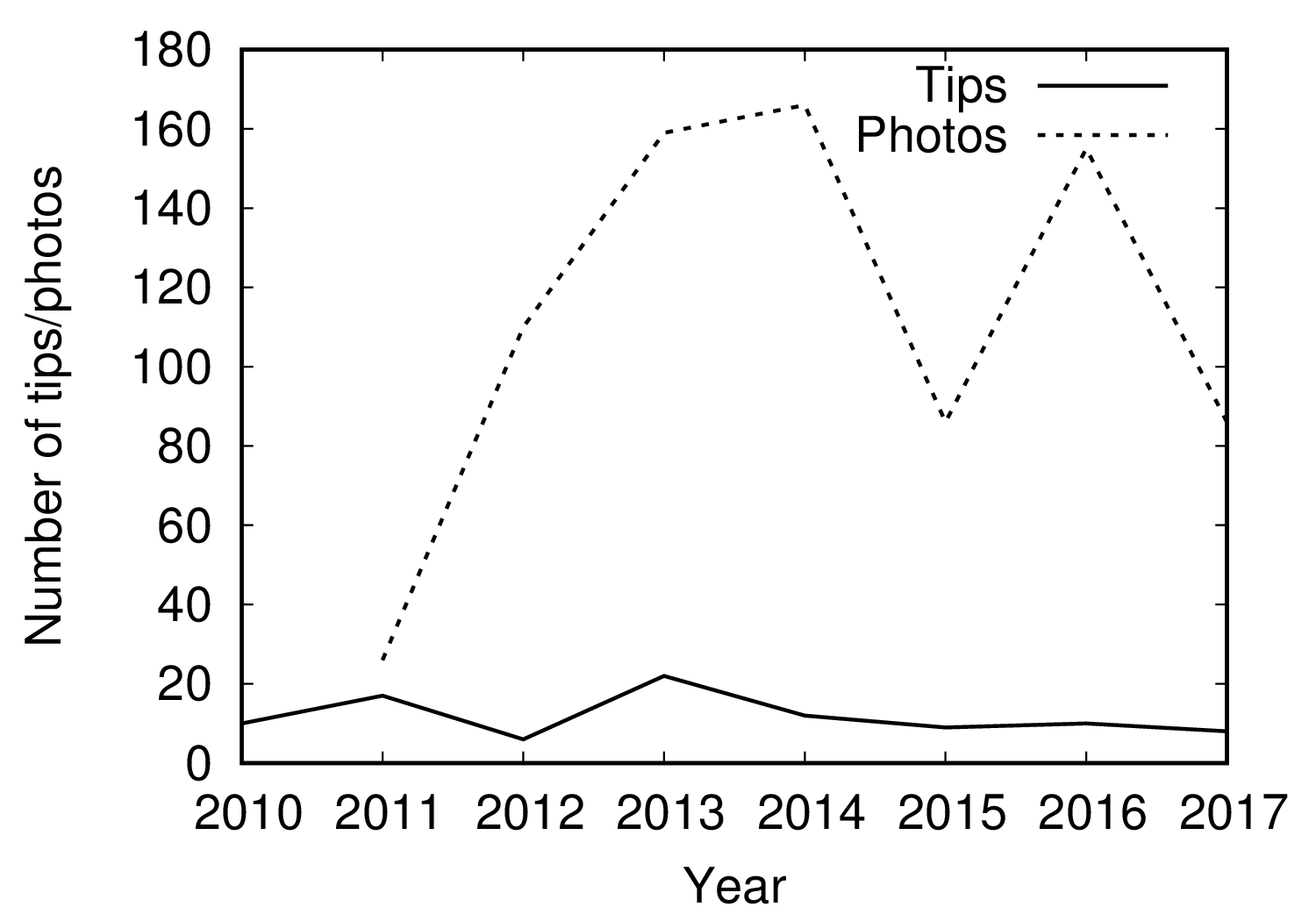}
      \caption{Foursquare} 
   \label{fig:temporalFSQ}
   
  \end{subfigure} %%

  \caption{Temporal sparsity} 
  \label{fig::temporalALL}

 \end{minipage}
 \end{figure*}

\textbf{Bandwidth}. The API bandwidth refers to the number of requests that can be performed within a time frame. API bandwidth differs from one social network to the other also for the granularity of time in which the requests are performed. For example, Twitter allows 180 requests in 15 min, meanwhile, Krak has a time window of a month. Google Places allowed 1000 requests in a day before June 2018 and now, just one request per day. If more requests are needed, the cost is 0.7 US cents/request (first 200 USD free). In our data extraction and experiments, we fix the bandwidth of Google Places to 1000 requests in a day (the former default). 

\textbf{Maximal result size}. The maximal result size is the maximal number of results returned by a single request. For instance, an API call in Flickr retrieves 250 photos but an API call in Google Places retrieves only 20 places.

\textbf{Historical access}.
This characteristic of the API is related to how accessible the earlier activity is. Directories such as Yelp, Krak and Google Places do not provide historical data; they only keep track of the current state of their locations. Foursquare provides only the current state of its venues and historical access to photos and tips by querying with venues. 
Flickr is able to go back in old photos whereas Twitter limits the results only to the last couple of weeks. 

\textbf{Supplemental results}.
Sometimes there is no data in the queried region. So, some APIs might return an empty result set, others might suggest the nearest entity matching the query. For example, if we search for "Zara" shop in city X, the API might return the "Zara" shop which is the closest to X but in city Y. We name these results as \emph{supplemental} as they are provided in addition to what we search for, in order to complete the request. We noticed that supplemental results are present only in directories ("Supp results" in Table \ref{tab:categories}), which aim to advertise and provide results anytime.

\textbf{Access to the complete dataset}. This feature shows if the API can query the whole dataset or just a sample. For example, the Twitter API accesses only a sample of 1\% while others enable access to its complete dataset.

\textbf{Costs}. All social networks and directories offer free APIs at no cost but with the aforementioned limitations. They also offer premium or enterprise services with a monthly payment or pay-as-you-go services. While some sources have predefined pricing plans (Twitter, Foursquare and Google Places), others offer the possibility to discuss the needs and the price (Yelp). Even though a premium service has less restrictions, it is still needed to keep costs down.

 \emph{Summary}. A summary of the limitations of social networks is presented in Table. \ref{tab:categories}. Krak shows promising results in terms of richness but is restrictive with the bandwidth. Google Places has a very small result size and only 1 request per day. Flickr is promising in terms of the API limitations, while Twitter shows severe problems regarding the limitation to access historical data. Foursquare and Yelp could be considered similar in terms or limitations.

\subsection{Data Scarcity}
\label{sec:datascarcity}
In this section, we analyze the quality of the extracted data from all six sources.

\textbf{Geographical sparsity}. The geographical sparsity indicates whether the number of locations available in a region is low or high. In order to have a visual perception of the spread of the points, we plotted them on the map (Fig. \ref{fig:map}). Twitter polygons cover a big area (some of them even reach 12,5\% of the whole region), while in terms of points, Twitter is the sparser. Google Places and Foursquare are both sparse but the former is spread more uniformly than the latter. Yelp and Flickr have points on the coast, which is a touristic area. With regard to the number of locations and number of points (Table \ref{tab:nrpoints}), it is possible to distinguish the leading position of Krak, which has almost 2 orders of magnitude more results than any other source, followed by Flickr, Foursquare, Yelp, Google Places and finally, Twitter. Moreover, Krak covers locations from 2.308 categories, including restaurants, bars, attractions along with locations that are related to everyday activities such as hair salon, fitness center, hospitals, kindergartens etc.

%\begin{figure}[htb]
% \centering 
% \begin{minipage}{.4\textwidth} 
% \centering  \includegraphics[width=\textwidth]{map2.png}
%    \caption{Geographical spread of locations}
%    \label{fig:map}
% \end{minipage}%
% \end{figure}

\begin{table}[htbp]
\centering
\scriptsize
\begin{tabular}{@{}lllllll@{}}
\toprule
\textbf{Category}      & \textbf{Krak}               & \textbf{Yelp}   & \textbf{GP}  & \textbf{FSQ}  & \textbf{TW}  & \textbf{Flickr}                  \\ \midrule
\textbf{Locations}  & 143,073  &  473 & 380 & 1,097 & 115  & 4,084 \\
\textbf{Points}  & 32,461  &  467 & 356 & 1,093 & 25  & 2,272
\\
\bottomrule
\end{tabular}
\caption{Number of locations and  points for each source}
\label{tab:nrpoints}
 \end{table}
 
\textbf{Temporal sparsity}. Since directories keep only the current state of their data, only social networks will be analyzed from this perspective. We recorded that the average of photos per day in Flickr is around 17, while for Twitter is 10. Foursquare is very sparse the average would be 0.03 tips a day, while for photos it would be 0.36 photos per day. The distribution over time is more interesting (Fig \ref{fig:temporalTwitter}, Fig \ref{fig:temporalFlickr}, Fig. \ref{fig:temporalFSQ}). The sources have considerable differences in their temporal activity. Flickr API has a larger result size and can retrieve historical data. Twitter can access only the latest couple of weeks.  Flickr has a high usage in the summertime as the coast of the region is touristic while Foursquare is generally quite sparse.

 %\begin{figure}[htb]
% \begin{minipage}{.5\textwidth}
%\begin{subfigure}[b]{0.33\linewidth}
%    \centering
%  \includegraphics[width=\linewidth]{ttemp.eps}
%    \caption{Twitter} 
%    \label{fig:temporalTwitter}
%    \end{subfigure}%%
%  \begin{subfigure}[b]{0.33\linewidth}
%    \centering
% \includegraphics[width=\linewidth]{ftemp.eps}
%    \caption{Flickr} 
%   \label{fig:temporalFlickr}
%   \end{subfigure} %%
 % \begin{subfigure}[b]{0.33\linewidth}
 %   \centering
    %\includegraphics[width=\linewidth]{fsq5.eps}
 %     \caption{Foursquare} 
%   \label{fig:temporalFSQ}
   
%  \end{subfigure} %%

%  \caption{Temporal sparsity} 
 % \label{fig::temporalALL}

% \end{minipage}
 %\end{figure}

\emph{Summary.} As shown in Section \ref{sec:datascarcity}, a single source cannot provide a rich enough dataset. Given the API limitations, the requests need to be managed in order to improve the density of the data. Since we have no prior knowledge regarding the queried area and the distribution of the points, this knowledge can be obtained by using multiple sources operating in the area. In the next section, we propose a novel algorithm that uses one of the sources as \emph{seed} to extract data from the others and is capable of obtaining up to 14.3 times more data than single source initial querying (see Section \ref{sec:experiments}).

\section{Multi-source Seed-Driven\\ approach}

Section \ref{sec:limitations} studied the limitations of data extraction and quantified the performance of each of the sources. In this section, we propose a main algorithm and several adaptions to it that lead to an effective data extraction process.

\subsection{Multi-Source Seed-Driven Algorithm}

%In order to remedy the challenge of limited scope data from a region of interest, the solution will be to use different sources and \emph{union} their locations together. As a result, the multi-source locations are more representative than single-source locations. 

As shown in the previous section, when the user has no prior and in-depth knowledge of the distribution of the locations in the source, querying the source with the default API settings and some user-defined API queries cannot achieve a rich dataset. We will denote the initial locations  obtained from each source as \un. 

\textbf{Source Initial} (\un) is the set of locations $L_I$ retrieved from the $k$ initial requests with default API parameters. 

%\textbf{Multi-Source Union} (\un) is the function that takes as input sources $S_1, S_2...S_k$, and outputs the set of locations $L_I=\bigcup\limits_{i=1}^{k}{L_i}$. 

%The Algorithm for computing \un\ is omitted since it is trivial. The sources are queried using their APIs (no prior knowledge) and then \un\ unions the results. \un\ offers a better representation of the locations in a region of interest but it does not contribute to optimizing the data extraction process because for each $S_i$ the $\vert L_i\vert $ is not being maximized.

Having no prior knowledge of the underlying data $L(S_i)$ makes it difficult to choose which API calls to perform. However, \emph{all the sources operating in the same region contain data that maps to the same physical world}. For example, if there is a bar in the point (56.89 9.21) in Krak, probably around this point there might be this and other locations in Yelp, Google Places, and Foursquare and even some activity such as tweets, photos or check-ins in Twitter, Flickr, and Foursquare. This means that if the \un\ of a source is rich, then its knowledge can be used to improve the data extraction of the other sources. Hence,  we propose a \emph{seed-driven} approach to extract locations from multiple sources. The main idea is simple; \emph{selecting one (more complete) source as the seed and feeding the points to the rest for data extraction}.

\textbf{Multi-Source Seed-Driven}  (\general) is a function that takes as input a set of sources $S_1, S_2...S_k$ and outputs their corresponding sets of locations  $\{L_{S_1}, L_{S_2}, ... L_{S_k} \}$ obtained from the seed-driven approach in Alg. \ref{alg:seed}.

For example, let us suppose that the seed provides a location with coordinates (57.05, 9.92) as in Fig. \ref{fig:ideaseeddriven}. We can search for locations across sources within the circle with center (57.05, 9.92) and a predefined radius. The different colors in the figure represent the different sources. We can discover three locations from the red source, two from the blue source and two from the green source. The algorithm for the seed-driven approach is presented in Alg. \ref{alg:seed}. Selecting a good seed is important, thus we start by getting the most complete source (with the most points) in line 4. The points in the seed indicate regions of interest and are used for the API request in the sources. So, for each point in seed (for each $p$ in $P$), we query  the rest of the sources except the seed source. Line 7 shows the general API call for each of the sources, which actually is performed in correspondence to the requirements of the source. The request takes the coordinates of $p$ and the radius $r$. The search returns a set of locations $L_p^r$, which is unioned to our source-specific output $L_S$. From Alg. \ref{alg:seed},  $L_S = \bigcup\limits_{j=1}^{k}( \bigcup\limits_{p=1}^{∣P∣}( L_p^r \bigcup L_I )) $ so $ L_I \subseteq L_S$. 

\begin{figure*}[t]
    \centering
 \begin{minipage}{.23\textwidth}
  \centering
\includegraphics[width=0.85\textwidth]{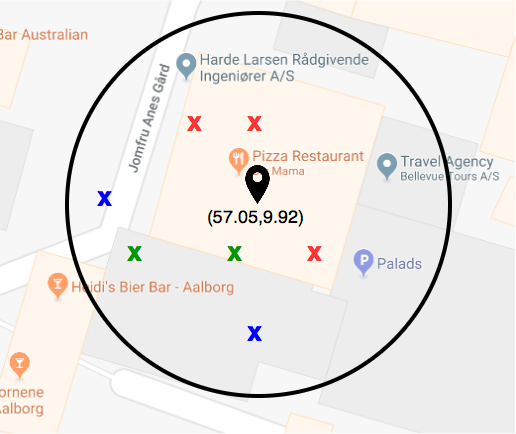}
    \caption{\seed\ approach}
    \label{fig:ideaseeddriven}
\end{minipage}%
\begin{minipage}{.75\textwidth}
  \centering
 \begin{subfigure}[b]{0.31\linewidth}
    \centering   \includegraphics[width=0.9\linewidth]{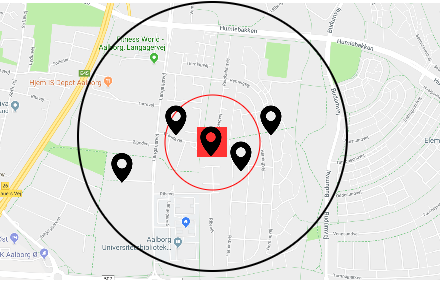}
     \caption{\den\ radius}
    \label{fig:ideasdensitybased}
  \end{subfigure}%%
  \begin{subfigure}[b]{0.31\linewidth}
    \centering    \includegraphics[width=0.9\linewidth]{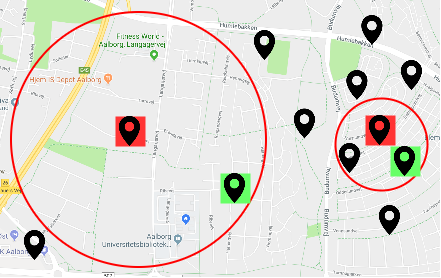}
    \caption{\nn\ radius}
    \label{fig:ideasneighbor}
  \end{subfigure}\begin{subfigure}[b]{0.31\linewidth}
    \centering    \includegraphics[width=0.9\linewidth]{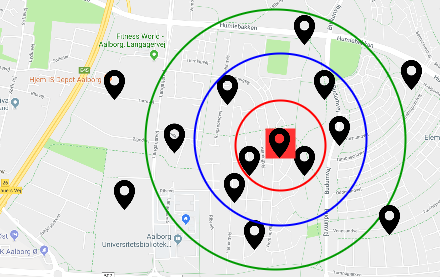}
    \caption{\rec\ radius}
    \label{fig:idearecursive}
  \end{subfigure}%% 
  \caption{\seed\ radius} 
  \label{fig:radius}
\end{minipage}
\end{figure*}

\begin{subalgorithms}

\begin{algorithm} [htp]
\caption{Multi-Source Seed-Driven (\seed) } 
\label{alg:seed} 
\begin{algorithmic}[1]
\INPUT A set of sources $\{S_1, S_2, ...S_k\}$, radius $r$
\OUTPUT  $\{L_{S_1}, L_{S_2}, ... L_{S_k} \}$
\FOR{\textbf{each} $S$ in $\{S_1, S_2, ...S_k\}$}
\STATE $L_S\gets L_I$ \Comment{Initialize $L_S$ of each source with $L_I$}
\ENDFOR
\STATE  Let $S_{seed}$ be the source with the most points in $\{S_1, S_2, ...S_k\}$, $L_{seed}$ its locations  and $P$ the distinct points in  $L_{seed}$                   
\FOR{\textbf{each} $p$ in $P$}  

\FOR{\textbf{each} $S$ in $\{S_1, S_2, ...S_k\}$ - $S_{seed}$}
\STATE $L_p^r \gets API(p, r)$ \Comment{API request for the source $S$ } 
\STATE  $L_S\gets L_S\cup L_p^r$
\ENDFOR

\ENDFOR

\RETURN{\  $\{L_{S_1}, L_{S_2}, ... L_{S_k} \}$}
\end{algorithmic}
\end{algorithm}

%Given that the function \un  preserves the attributes of locations as in Definition \ref{def:activity}, we have that $\forall l \in L_I  : l \in ( \bigcup\limits_{p=1}^{∣L_j∣} L_j^p ) \bigcup L_I  \Rightarrow l \in (\bigcup\limits_{j=1}^{k} \bigcup\limits_{p=1}^{∣L_j∣} L_j^p) L_I \Rightarrow l \in L_S$. So, it always holds that $ L_I \subseteq L_S$. 

\subsection{Optimizing the Radius}
\label{sec:radius}

\begin{figure}[b]
\centering
 \begin{subfigure}[b]{0.35\linewidth}
    \centering   \includegraphics[width=0.95\linewidth]{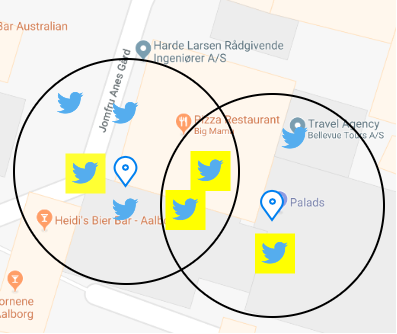}
     
  \end{subfigure}%%
  \begin{subfigure}[b]{0.35\linewidth}
    \centering    \includegraphics[width=0.95\linewidth]{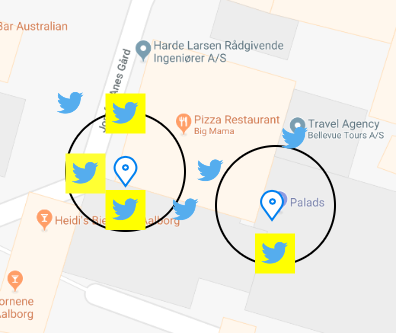}
   
  \end{subfigure}
\caption{Radius adjustment}
    \label{fig:radiusillustration}
  \end{figure}

We can improve further \seed\  by adapting the API request to the source. Alg. \ref{alg:seed} considers the radius as a constant. The choice of a good $r$ is a challenging task, given that we do not know the real distribution of locations in the sources. Even though a big radius might seem like a better solution because we query a larger area, note that the API retrieves only a sample of the underlying data. Hence, if we query with points that are nearby, we might retrieve intersecting samples. Fig. \ref{fig:radiusillustration} shows an example of the radius adjustment problem in Twitter. Let us assume that the result size is 3, which means that the API can not retrieve more than 3 tweets. If we query with a big radius as in the left part of Fig.\ref{fig:radiusillustration}, we might get only 2 out of 3 tweets in the intersection. If the radius is small, then we explore better the dense areas but we might miss in sparser ones like in the right part of Fig. \ref{fig:radiusillustration}. The union of tweets in both searches is just 4, where ideally it should have been 6. We propose two improvements: \emph{using the knowledge of the seed to query the sources} and  \emph{using a recursive method to learn a suitable radius for the source}. In the first improvement, we assume that the data distribution in the seed is similar to the sources we will query, so we adjust the radius accordingly.  In the second improvement, we use the points of the seed but we adjust the radius while querying the source.

\textbf{Multi-Source Seed-Driven Density-Based} \den. The radius in this version is defined by the density of points in the \emph{seed}.  Before the API requests, we check the density of points in the search area in the \emph{seed}. If the area is dense, then we reduce the radius accordingly. For example, we need to query $S_2$ and the seed is $S_3$. First, we check how many points in the seed $S_3$ fall within the circle with center $p$ and initial radius $r$ (no API requests needed). If there are $N$ points, then we query with $<p,r_d>$ where $r_d=\frac{r}{N}$.
\begin{algorithm} [htp]
\caption{ \seed\ Density-Based (\den) } 
\label{alg:densitybased} 

\begin{algorithmic}[1]

\setcounter{ALC@line}{5}

%\FOR{\textbf{each distinct} $p$ in $\mathcal{L}_i$}  

\updatelinenoprint
    
\STATE  Find $N=\{q∣q\in Circle (p, r)\}$ \Comment{Find how dense the region is}

\STATE $r_d=\frac{r}{∣N∣}$ \Comment{Adjust the radius}

\setcounter{ALC@line}{7}

%\ENDFOR

\end{algorithmic}

\end{algorithm}

Fig. \ref{fig:ideasdensitybased} illustrates the intuition behind \den. We are using the point in red as seed point $p$. First, we check how many points of the \emph{seed} are in the search area (4 points in the black circle). Second, we adjust the radius according to the density, so in this case, we divide the radius by 4. Finally, we perform the API call to the source with the red circle. Alg. \ref{alg:densitybased} shows the alterations we make in Alg. \ref{alg:seed} for the radius calculation. We add line 5.a and 5.b after line 5 in Alg. \ref{alg:seed}. First, we find the density of the region and then, we adjust the radius depending on the density. We query with the adjusted $r_d = \frac{r}{\vert N \vert}$ in line 7 of Alg. \ref{alg:seed}.

\textbf{Multi-Source Seed-Driven Nearest Neighbor} \nn.
As the name suggests, we use the nearest neighbor in the \emph{seed} to define the radius. This approach guarantees that dense regions will be queried with a small radius and sparse regions with a large radius. For example, we need to query $S_2$ and the seed is $S_3$ as previously. First, we check the nearest neighbor of $p$ in $S_3$ (no API call needed), which is $q$, then we query with $<p,r_n >$ where $r_n=| p-q |$. 
A simple illustration of this idea is presented in Fig. \ref{fig:ideasneighbor}. For each of the points in the seed (in red), we find the nearest neighbor in the seed (in green) and then we query with the adjusted radius. Note that we query with a small radius in dense areas and a big radius in sparse ones. Alg. \ref{alg:nearestneighbor} instead adds line 5.a and 5.b after line 5 of Alg. \ref{alg:seed}. It finds the nearest neighbor $q$ of the point $p$. Then, we set $r_n=| p-q |$. The adjusted $r_n$ is used to query the sources in line 7 of Alg. \ref{alg:seed}.

\begin{algorithm} [htp]

\caption{ \seed\ Nearest Neighbor (\nn)} 
\label{alg:nearestneighbor} 

\begin{algorithmic}[1]

\setcounter{ALC@line}{5}
%\FOR{\textbf{each distinct} $p$ in $\mathcal{L}_i$}  

\updatelinenoprint

\STATE Find $q= \underset{q\in L_i}{\text{min}}| p-q |$ \Comment{Find the nearest neighbor}

\STATE $r_n=| p-q |$ \Comment{Adjust the radius}

%\ENDFOR

\end{algorithmic}

\end{algorithm} 
\textbf{Multi-Source Seed-Driven Recursive} \rec. The previous versions (\den\ and \nn) rely on the hypotheses that the distribution of the points in the seed indicates similar distribution on the sources. For example, if there is a dense area in the seed, e.g. the city center, then most probably the same area is dense in the source. The advantage of this hypothesis is that no API call is needed to adjust the radius because these calculations are performed on the \emph{seed} points. Ideally, this hypothesis would hold but in fact, taking into consideration the analytics performed in previous sections, there is a difference between the distribution of points in the sources, originating mostly from their scope. Consequently, the seed-oriented methods might perform well in most of the cases but will fail in specific areas. For instance, if Krak is the seed and it is dense in the city center, that is a good indicator and most probably all the sources are dense in that area. But Krak will miss the density in the (touristic) coastline that Flickr reveals. Thus, there is a need for a better approach to assigning a good radius for a specific area. We propose a solution that adjusts the radius while querying the source. First, if an area not dense, we are able to identify it from the API call. We assume that if the area contains less than $M_S$ locations, the API call will retrieve all of them.

\begin{assumption}
For each source $S$ in $\{S_1, S_2, ..., S_k\}$, if $Circle <p,r>$ contains $L_p^r(S)$ locations such that $\vert L_p^r(S) \vert \leq M_S$, then $API(p,r)$ will retrieve $L_p^r = L_p^r(S)$.
\label{assumption:queryM}
\end{assumption}

The API retrieves a sample of size $M_S$  of the underlying data in a queried region $Circle (p,r)$. If the underlying locations $L_p^r(S)$ are already less than $M_S$, then we assume that the API will retrieve \emph{all} the locations lying in $Circle (p,r)$. For example, if there are 30 locations in $Circle ((56.78$ $ 9.67),1km)$ and $M_S=50$, then querying with $p=(56.78$  $9.67)$ and $r$=1 km will return all 30 locations.

\begin{theorem}
Let $<p,r>$ be a pair of point and radius such that $API(p,r) = L_p^r$ where $\vert L_p^r \vert  < M_S$. Then, for all $r'$ such that $r' < r$,  $ L^{r\prime}_p \subseteq  L_p^r $. 
\label{theorem:stoprecursive}
\end{theorem}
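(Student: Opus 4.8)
The plan is to reduce the claim about API \emph{samples} to a claim about the \emph{underlying} location sets $L_p^r(S)$ in the queried circles, where the nesting of circles makes the inclusion transparent, and then to use Assumption~\ref{assumption:queryM} to pass back from the underlying sets to the returned results.

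First I would observe that the hypothesis $\vert L_p^r \vert < M_S$ forces the API to have returned \emph{every} underlying location in $Circle(p,r)$, i.e.\ $L_p^r = L_p^r(S)$. The reasoning is that a strictly-below-cap result can only occur if the call was not truncated by $M_S$: were the underlying set $L_p^r(S)$ of size larger than $M_S$, the call would have saturated and returned $M_S$ results, contradicting $\vert L_p^r \vert < M_S$. Hence $\vert L_p^r(S) \vert \leq M_S$, and Assumption~\ref{assumption:queryM} then yields $L_p^r = L_p^r(S)$.

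Next I would treat the geometry, which is routine. Since $r' < r$ and both circles share the centre $p$, we have $Circle(p,r') \subseteq Circle(p,r)$, so the underlying locations nest, $L_p^{r'}(S) \subseteq L_p^r(S)$, because membership of a location is decided solely by whether its point $p(l)$ lies in the circle. On the sampling side, the definition of $API$ guarantees that every returned location lies inside the queried circle, hence $L_p^{r'} \subseteq L_p^{r'}(S)$. Chaining these inclusions with the identity from the first step gives
\[
L_p^{r'} \subseteq L_p^{r'}(S) \subseteq L_p^r(S) = L_p^r,
\]
which is exactly the asserted claim.

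The main obstacle is the first step, as it is the only place the hypothesis $\vert L_p^r \vert < M_S$ is used, and it rests on the (implicit) semantics that the API returns as many in-circle locations as it can up to the cap $M_S$. Assumption~\ref{assumption:queryM} supplies one direction of this, but to invoke it I must first rule out $\vert L_p^r(S) \vert > M_S$, which requires the saturating behaviour of the API; I would state this explicitly so that the deduction $L_p^r = L_p^r(S)$ is clean. Once that identity is secured, the remaining inclusions are purely set-theoretic and geometric and need no further assumptions.
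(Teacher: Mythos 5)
Your proof is correct and takes essentially the same route as the paper's: both pass from the returned samples to the underlying sets $L_p^r(S)$, use the nesting of concentric circles to get $L_p^{r'}(S) \subseteq L_p^r(S)$, and invoke Assumption~\ref{assumption:queryM} to identify $L_p^r$ with $L_p^r(S)$. Your write-up is in fact slightly tidier in two respects: you close the chain with the definitional inclusion $L_p^{r'} \subseteq L_p^{r'}(S)$ instead of re-applying Assumption~\ref{assumption:queryM} at $r'$ as the paper does, and you explicitly name the saturation behaviour of the API needed to pass from $\vert L_p^r \vert < M_S$ to $\vert L_p^r(S) \vert \leq M_S$ --- a step the paper's proof silently elides when it applies the assumption whose hypothesis concerns $\vert L_p^r(S)\vert$ using only the bound on $\vert L_p^r\vert$.
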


\begin{proof}
Let us assume that there are  $\vert L_p^r (S)\vert $ locations in $Circle(p,r)$  and $\vert L_p^{r\prime} (S)\vert $ locations in $Circle(p,r\prime)$ . Since $r'<r$, then the surface covered by $Circle(p,r\prime)$ is smaller than the surface covered by $Circle(p,r)$ ($\pi {r'}^2 < \pi {r}^2$). Consequently, $ L_p^{r\prime} (S) \subseteq L_p^{r} (S)$. According to Assumption \ref{assumption:queryM}, since $API (p,r)$ retrieves $ \vert L_p^r \vert < M_S$, then $ L_p^r  = L_p^{r} (S) $ and $ \vert L_p^r (S) \vert < M_S$. Given that $ L_p^{r\prime} (S) \subseteq L_p^{r} (S)$ and  $ \vert L_p^r (S) \vert < M_S$, we conclude that $ \vert L_p^{r\prime}\vert < M_S $  and $L_p^{r\prime}  = L_p^{r\prime} (S) $. Finally, from $ L_p^{r\prime} (S) \subseteq L_p^{r} (S) $, we derive that $  L_p^{r\prime} \subseteq  L_p^{r} $.
\end{proof}

This is an important finding that will be used in defining \rec. Since there are less than $M_S$ locations retrieved by the API call in source $S$, there are no new locations to be gained by querying with a smaller radius.
Thus, we propose a recursive method to assign the radius that uses Theorem \ref{theorem:stoprecursive} as our stopping condition. First, we query with an initial large radius and if the result size is $M_S$, then we know this is a dense area and we perform another request with a smaller radius. The search stops when the number of returned results is smaller than the maximal result size because according to  Theorem \ref{theorem:stoprecursive}, there is no gain in reducing $r$ further. For example, we need to query $S_2$ and the seed is $S_3$. Let us suppose that the maximal result size is 50 locations. If the request of $<p,r>$ returns 30 locations, then this is not a dense area. Otherwise, we perform another request with $<p,r_r>$ where $r_r=\frac{r}{\alpha}$ and $\alpha$ is a coefficient for reducing the radius. We stop when we retrieve less than 50 locations. The recursive method is illustrated in Fig. \ref{fig:idearecursive}. Let us suppose that we are querying source $S$ and the maximal result size is $M_S=5$. After querying with the green circle, we get 5 locations so we know that the area is dense. We reduce the radius by $\alpha$ and query again with the blue circle. We get again 5 locations, therefore we continue once more with a smaller radius. When we query with the red circle, we get only 2 locations so we stop.

\begin{algorithm} [htp]

\caption{\seed\ Recursive (\rec)} 
\label{alg:recursive} 

\begin{algorithmic}[1]

\setcounter{ALC@line}{5}

\FOR{\textbf{each} $S$ in $\{S_1, S_2, ... S_k\}$ - $S_{seed}$}
\STATE $L_p^r \gets \emptyset$
\STATE  $L_p^r \gets$ RadRecursive($r_r$, $\alpha$, $p$, $S$, $L_p^r$) 
\STATE  $L_S\gets L_S \cup L_p^r$
\ENDFOR

\end{algorithmic}

\end{algorithm}

\end{subalgorithms}

\begin{subalgorithms}
\begin{algorithm} [htp]
\caption{RadRecursive} 
\label{proc:recursive}

\begin{algorithmic}[1]

\INPUT  $r_r$, $\alpha$, $p$, $S$, $L_p^r$
\OUTPUT  $L_p^r$
\STATE $R \gets API(p, r_r, S)$ \Comment{Query $S$ with $r_r$} 
\STATE $L_p^r \gets L_p^r \bigcup R$
\IF{$\lvert R \rvert < M_S$ } 

\STATE \textbf{return} $L_p^r$ \Comment{The area is not dense}
\ELSE 
\STATE RadRecursive($\frac{r_r}{\alpha}$, $\alpha$, $p$, $S$, $L_p^r$) \Comment{Call with new $r_r$}
\ENDIF

\end{algorithmic}

\end{algorithm}

Alg. \ref{alg:recursive} (\rec) has a modification in line 7 of Alg. \ref{alg:seed}, where a new algorithm is called. Instead of querying with a static $r$, we perform a recursive procedure to adjust the radius. Alg. \ref{proc:recursive} takes the following parameters: the radius $r_r$, the coefficient $\alpha$ which is used to reduce the radius, the point $p$ that comes from the seed and the queried source $S$. The stopping condition of the recursive procedure is retrieving less than $M_S$ locations (line 4).

The extraction of locations is a  dynamic procedure on live data so it is not possible to detect beforehand which of the methods will perform better.  \rec\ promises to be adaptive and deal with dynamic situations. However, the number of requests are limited and our \rec\ performs extra requests to adjust the radius. Moreover, we have no control over the number of requests needed by \rec\ but under the following assumption, we know that \rec\ converges:

\begin{assumption}
Let $S$ be a source and $L(S)$ its locations. For each point $p$ there exists a radius $r_p$ such that the surface covered by $Circle (p,r_p)$ contains less than $M_S$ locations.
\label{assumption:lessM}
\end{assumption}

The value of $r_p$ varies depending on the density of the region. If the point $p$ is in a sparse area, the value of $r_p$ is large and vice versa. We assume that there will always be an $r_p$ such that $\vert L_p^{r_p} \vert < M_S$. \rec\ performs several requests decreasing $r$ by $\alpha$ until $r_p$ is found and  \rec\ reaches the stopping condition. Given Assumption \ref{assumption:lessM}, we guarantee that \rec\ performs a finite number of requests.

\subsection{Optimizing the Point Selection}
\label{sec:opt}

\begin{figure}[b]
 \centering  \includegraphics[width=0.18\textwidth]{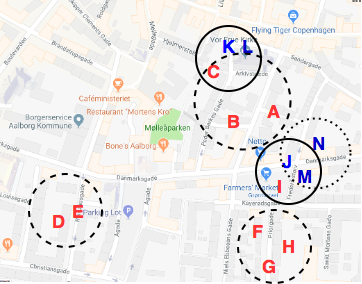} \includegraphics[width=0.12\textwidth]{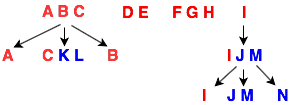}
    \caption{\opt}
    \label{fig:ideaoptimized}
\end{figure}
All \seed\ algorithms are based on exhaustive querying. However, some seed points might be quite close to each other, resulting in redundant API requests. A naive solution is clustering points together if they are within a threshold distance. This solution has the following drawbacks: (i) it is sensitive to the parameters of the clustering, (ii) it uses the density of the seed but does not adapt to the source data distribution. We propose \opt\ that minimizes the number of requests and overcomes these challenges. The idea behind \opt\ is to cluster the seed points using DBSCAN  \cite{ester1996density} (the best clustering algorithm for spatial data with noise) and to query with the centroids of the clusters. If the results size is maximal, then there is a possibility that this is a dense area. Afterwards, we apply DBSCAN on \emph{the union of the points of the current cluster and the points retrieved from the API request}. Depending on the data distribution of the source, we move the focus to the dense areas that we discover. This procedure continues recursively until the result size of the request is less than the maximal (based on Theorem \ref{theorem:stoprecursive}).

Fig. \ref{fig:ideaoptimized} shows a simple example of \opt. The red points come from the seed whereas the blue ones are in the source. The initial DBSCAN will cluster together (A, B, C), (D, E), (F,G, H) and I. After the querying with the centroids of these clusters, only clusters (A, B, C) and I will continue further. The new clusters for (A, B, C) will be A, B and (C, K, L), where K and L are points from the source. For cluster I, we query with the centroid of (I, J, M). In the third step cluster (I, J, M) is divided to I, (J,M) and N, where N is a new point discovered from the second step. \opt\ is formalized in Alg. \ref{alg:seedoptimized}. After a source is chosen, its points are clustered with DBSCAN (line 5) using $\epsilon$ as minimal distance between points and $m$ as the number of points that a cluster should have. For each of the centroids of the clusters, we call RadRecursive* (Alg. \ref{proc:DBSCANrecursive}), which is similar to its parent version, RadRecursive (Alg. \ref{proc:recursive}) regarding the stopping condition and the adaptive radius but differs from line 6 and on (the \emph{else} clause). If the area is dense, then we split the cluster by taking into consideration not only the set of points in the cluster ($C_p$) but also the retrieved points from the source ($R$). DBSCAN is run again with new parameters (line 6). For each centroid $c'$ of the result $C'$ we call the algorithm again with the adjusted parameters. Note that in the case of Twitter, the majority of results $R$ is polygons. Therefore we modify line 6 in  Alg. \ref{proc:DBSCANrecursive} with (i) the centroids of the polygons and we denote this version of \emph{MSSD*-C}  or (ii) the nearest point of the polygon to the queried point $p$ and we denote this version as \emph{MSSD*-N}.

\begin{algorithm} [!htp]

\caption{RadRecursive*} 
\label{proc:DBSCANrecursive}

\begin{algorithmic}[1]

\INPUT  $r_r$, $\alpha$, $<p, C_p>$, $S$, $\epsilon$, $m$, $L_p^r$
\OUTPUT  $L_p^r$
\setcounter{ALC@line}{5}
\STATE $\{C\}' \gets DBSCAN(C_p \bigcup R, \frac{\epsilon}{\alpha}, \frac{m}{\alpha})$ \Comment{DBSCAN on the union of $C_p$ and $R$ with new parameters}
\FOR{\textbf{each}  $<c',C'>$ }
\STATE RadRecursive*($\frac{r_r}{\alpha}$, $\alpha$, $<c',C'_c>$, $S$, $\frac{\epsilon}{\alpha}$, $\frac{m}{\alpha}, L_p^r$) 
\ENDFOR

\end{algorithmic}

\end{algorithm} 
\end{subalgorithms}

\begin{algorithm} [!htp]

\caption{\opt\ algorithm} 
\label{alg:seedoptimized} 
\begin{algorithmic}[1]
\INPUT A set of sources $\{S_1, S_2, ...S_n\}$, radius $r$
\OUTPUT    $\{L_{S_1}^*, L_{S_2}^*, ... L_{S_k}^* \}$
\FOR{\textbf{each} $S$ in $\{S_1, S_2, ...S_k\}$ - $S_{seed}$}
\STATE $L_S\gets L_I=\bigcup\limits_{i=1}^{k}{L_i}$ \Comment{Initialize each $L_S$ with $L_I$}
\ENDFOR
\STATE  Let $S_{seed}$ be the source with the most points in $\{S_1, S_2, ...S_k\}$, $L_{seed}$ its locations  and $P$ the distinct points in  $L_{seed}$

\STATE $\{C\} \gets DBSCAN(P), \epsilon, m) $

\FOR{\textbf{each} $S$}
\FOR{\textbf{each} $<c,C>$}

\STATE $L_p^r \gets \emptyset$
\STATE $L_p^r \gets $ RadRecursive*$(r, \alpha, <c, C_c>, S, \epsilon, m, L_p^r)$ 
\STATE  $L_S \gets L_S \cup L_p^r$
\ENDFOR

\ENDFOR

\RETURN{\  $\{L_{S_1}^*, L_{S_2}^*, ... L_{S_k}^* \}$}
\end{algorithmic}
\end{algorithm}

\opt\ has these advantages: (i) \opt\ manages better the requests by using the centroids of clusters rather than all the points in a cluster, (ii) \opt\ is not sensitive to parameters because it uses an adaptive algorithm to learn them for each of the sources, and (iii) while querying, \opt\ adapts the center of the circle depending on the locations found by the previous query. Let us now suppose that the optimal combination of pairs of $<p^*, r^*>$ that retrieve the maximal $L^*$ exists. In order to compare our solution to the optimal, let us first prove the submodularity of our problem.

%Regarding the number of requests, \opt performs $\sum_{i=1}^n \lvert C_i \rvert$ requests, where $\lvert C_1 \rvert$ is the number of the centroids of the initial clusters, $\lvert C_2 \rvert$ is the number of the centroids of the clusters created when splitting some clusters of $\lvert C_1 \rvert$ and $\lvert C_n \rvert$ is the number of the centroids of the last clusters.

\begin{theorem}
Optimizing the data extraction  based on API calls is a monotone submodular problem.
\label{property:submodular}
\end{theorem}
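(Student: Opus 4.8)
The plan is to recognize the objective as a set-coverage function over a ground set of candidate queries, and then invoke the standard fact that coverage functions are monotone and submodular. First I would fix the source $S_i$ and take the ground set $V$ to be the set of all admissible query pairs, where each element $e=<p,r>\in V$ is identified with the (deterministic) sample $L_p^r = API(p,r)$ it returns. For a set of queries $A \subseteq V$ write $U_A = \bigcup_{<p',r'>\in A} L_{p'}^{r'}$ and define $f(A) = |U_A|$; this is exactly the quantity $|L_i|$ that the problem definition maximizes subject to the request budget $|A| \le n$. Since $f(\emptyset)=0$, it then remains only to show that $f$ is monotone and submodular.

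Monotonicity is immediate: if $A \subseteq B$ then $U_A \subseteq U_B$, hence $f(A) = |U_A| \le |U_B| = f(B)$.

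For submodularity I would write the marginal gain of a query $e=<p,r>$ explicitly as the number of newly covered locations,
$$f(A\cup\{e\}) - f(A) = \left| L_p^r \setminus U_A \right|.$$
Now fix any $A \subseteq B \subseteq V$ and any $e = <p,r> \notin B$. Because $A \subseteq B$ gives $U_A \subseteq U_B$, enlarging the already-covered set can only shrink the residual, so $L_p^r \setminus U_B \subseteq L_p^r \setminus U_A$. Taking cardinalities yields
$$f(A\cup\{e\}) - f(A) = |L_p^r \setminus U_A| \ge |L_p^r \setminus U_B| = f(B\cup\{e\}) - f(B),$$
which is exactly the diminishing-returns characterization of submodularity. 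Together with monotonicity, this establishes the claim.

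The main obstacle is not the calculation but a modeling point: one must justify treating each query as returning a fixed set $L_p^r$, so that $f$ is well defined on subsets of $V$. This is precisely what the definition of $API(p,r)$ (together with Assumption \ref{assumption:queryM}) permits — the sample returned for a given $<p,r>$ is determined by the underlying data — so the coverage argument goes through without resorting to expectations over a randomized API. I would also remark that this property is what legitimizes the greedy, cluster-by-cluster behaviour of \opt: monotone submodular maximization under the request-budget (cardinality) constraint is approximable to within the classical factor $1-1/e$ by a greedy scheme.
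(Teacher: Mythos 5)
Your proof is correct and follows essentially the same route as the paper's: both arguments reduce the objective to a coverage function over query results and establish diminishing returns by comparing the residual new locations $L_p^r \setminus U_A$ (equivalently, the paper's $L_p^r \cap U_A$) against nested unions of previously retrieved sets. Your version is marginally cleaner in making the determinism of $API(p,r)$ explicit and in allowing per-query radii, but the decomposition and the key set-inclusion step are the same.
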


\begin{proof}
An API call takes $<p,r>$ as parameters and retrieves $L_p^r$ locations. Let us denote as $\gamma (p,r)$ the \emph{gain} (\emph{new locations}) that $API(p,r)$ brings. Note that an extra API call is effective as long as it contributes to the union of the results of the previous calls. To prove the \emph{submodularity}, we need to show that $\gamma (P' \cup p,r) \geq \gamma (P \cup p,r)$ if $P' \subset P$.

Let us consider a set of points $P$ and $P'$ such that  $P' \subset P$. The locations retrieved from $P'$ are $\bigcup_{i=1}^{\vert P' \vert} L_{p_i}^r$ and the locations retrieved from $P$ are $\bigcup_{i=1}^{\vert P \vert} L_{p_i}^r$.  Since  $P' \subset P$, $\bigcup_{i=1}^{\vert P' \vert} L_{p_i}^r \subseteq \bigcup_{i=1}^{\vert P \vert} L_{p_i}^r$.  
Let us consider a new point $p$. and $L_p^r$ the result of $API(p,r)$. Since  $\bigcup_{i=1}^{\vert P' \vert} L_{p_i}^r \subseteq \bigcup_{i=1}^{\vert P \vert} L_{p_i}^r$, then $(L_p^r \cap  (\bigcup_{i=1}^{\vert P' \vert} L_{p_i}^r))  \subseteq  (L_p^r \cap  (\bigcup_{i=1}^{\vert P \vert} L_{p_i}^r))$. As a result, $\gamma (P' \cup p, r) \geq \gamma (P \cup p, r)$. In order to prove the \emph{monotonicity}, for every $P' \subseteq P $,  $\vert \bigcup_{i=1}^{\vert P' \vert} L_{p_i}^r \vert \leq \vert \bigcup_{i=1}^{\vert P \vert} L_{p_i}^r \vert$. So, the more we increase the set of seed points, the more locations we get. It is simple to show that $\bigcup_{i=1}^{\vert P \vert} L_{p_i}^r = (\bigcup_{i=1}^{\vert P' \vert} L_{p_i}^r) \cup (\bigcup_{i-i}^{\vert P-P'\vert}L_{p_i}^r) $  so $(\bigcup_{i=1}^{\vert P' \vert} L_{p_i}^r) \cup (\bigcup_{i-i}^{\vert P-P'\vert}L_{p_i}^r) \supseteq \bigcup_{i=1}^{\vert P' \vert} L_{p_i}^r$. Hence, $\vert \bigcup_{i=1}^{\vert P' \vert} L_{p_i}^r \vert \leq \vert \bigcup_{i=1}^{\vert P \vert}$. \end{proof}

Our \opt\ tries to solve the data extraction problem by providing a solution that starts with initial centroids and then splits further if the area looks promising in terms of density. However, we extract only $M_S$ locations in one call and this sample might not be representative if the amount of the actual locations in the area may be quite large. So, if the sample of the $M_S$ points misses some dense areas, our DBSCAN will classify those as outliers and we will not query further. Thus, our solution is \emph{greedy} because it makes a locally optimal solution regarding which API calls to perform in step $i+1$ based on the information of step $i$.

\begin{theorem}
The greedy solution \opt\ of our monotone submodular problem performs at least $1 - \frac{1}{e}$ as good as the optimal solution in terms of maximizing the number of locations, where $e$ is the base of the natural logarithm.
\label{corollary:greedy}
\end{theorem}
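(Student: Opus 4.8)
The plan is to reduce Theorem~\ref{corollary:greedy} to the classical result of Nemhauser, Wolsey and Fisher, which states that the greedy algorithm for maximizing a nonnegative monotone submodular set function under a cardinality constraint of size $n$ returns a solution whose value is at least $(1 - (1 - 1/n)^n) \geq 1 - 1/e$ times the optimum. Since Theorem~\ref{property:submodular} has already established that our objective is monotone and submodular, the bulk of the work is to cast \opt\ precisely into this framework and then invoke the classical guarantee.

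First I would fix the ground set and the objective. The ground set $N$ consists of all admissible query descriptors $<p,r>$ (points and radii), and for a selected set $Q \subseteq N$ I define $f(Q) = |\bigcup_{<p,r> \in Q} L_p^r|$, the number of \emph{distinct} locations retrieved. Clearly $f(\emptyset) = 0$ and $f$ is nonnegative; monotonicity and submodularity are exactly what Theorem~\ref{property:submodular} proves, since the marginal-gain argument there extends verbatim from points $p$ to descriptors $<p,r>$ (the radius enters only through the retrieved set $L_p^r$). The budget of $n$ API calls is then precisely a cardinality constraint $|Q| \leq n$ on $N$.

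Second I would argue that \opt\ realizes the greedy selection rule on this instance: at each step it commits to the query that, given the locations already collected, adds the largest number of previously unseen locations, i.e. it maximizes the marginal gain $\gamma(\cdot)$ defined in the proof of Theorem~\ref{property:submodular}. Concretely, the recursive clustering drives the centroids toward the dense, high-yield regions revealed by the previous request, while the stopping rule of Theorem~\ref{theorem:stoprecursive} guarantees that \opt\ never spends a call on a radius that cannot contribute new locations. Once \opt\ is identified with the greedy procedure, the Nemhauser--Wolsey--Fisher bound applies directly and yields a solution covering at least $(1 - 1/e)\,|L^*|$ locations, which is the claim.

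The main obstacle will be the second step: \opt\ is a \emph{heuristic} greedy, so I must be careful about the sense in which it maximizes the marginal gain. Because \opt\ draws candidate centroids from DBSCAN on the union of seed points and already-retrieved locations rather than from the entire continuous space of admissible $<p,r>$, its per-step choice is optimal only over the induced candidate set, not over all of $N$. I would therefore either (i) state the $(1-1/e)$ guarantee for the idealized greedy that picks the global per-step maximizer and argue that \opt\ approximates it, or (ii) restrict the ground set $N$ to the descriptors \opt\ actually considers and prove the bound relative to the optimum over that restricted set. A secondary subtlety worth flagging is that the sampling cap $M_S$ means a single call need not return all underlying locations, so the clean correspondence between \opt's choices and true marginal gains relies on Assumption~\ref{assumption:queryM}; this is where I expect the argument to need the most care.
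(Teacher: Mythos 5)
Your proposal follows essentially the same route as the paper: both reduce the claim to the Nemhauser--Wolsey--Fisher guarantee for greedy maximization of a nonnegative monotone submodular function under a cardinality (request-budget) constraint, using Theorem~\ref{property:submodular} for the structural hypotheses. The only presentational difference is that the paper re-derives the bound inline via the telescoping argument ($\delta_{i+1} \leq (1-\tfrac{1}{n})\delta_i$, hence $\delta_k \leq (1-\tfrac{1}{n})^k \delta_0 \leq e^{-k/n}\delta_0$), whereas you invoke \cite{nemhauser1978analysis} as a black box; mathematically these are the same proof. The one substantive point where you go beyond the paper is the obstacle you flag at the end: the key inequality in this style of argument bounds the marginal gain of every element of the optimal solution by the marginal gain of the element the algorithm actually picks next, and that step is only valid for the \emph{idealized} greedy that maximizes $\gamma(\cdot)$ over all admissible $<p,r>$. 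Since \opt\ selects its next query only from the DBSCAN-induced candidate centroids (and sees only an $M_S$-capped sample, per Assumption~\ref{assumption:queryM}), the paper's proof silently assumes exactly what you identify as needing care; your options (i) and (ii) --- proving the bound for the idealized greedy, or restricting the ground set to the candidates \opt\ actually considers --- are the correct ways to close that gap, and either would make the argument more rigorous than the one given in the paper.
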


A greedy approach to a monotone submodular problem generally offers a fast and simple solution that is guaranteed to be at least $1 - \frac{1}{e}$ as good as the optimal solution \cite{nemhauser1978analysis}. The proof uses the submodularity and the monotonicity to show the ratio between the greedy and the optimal solution.

\begin{proof}
Let $L^*$ be the result of the optimal solution from points $P^*$. and $L_k$ the greedy solution provided by \opt\ for $n$ requests. Due to the monotonicity, we can write:
{\scriptsize
\begin{align*}
\bigcup_{i=1}^{\vert P^* \vert} L_{p_i}^r \leq \bigcup_{i=1}^{\vert P^* \cup P' \vert} L_{p_i}^r = \bigcup_{i=1}^{\vert P' \vert} L_{p_i}^r + \sum_{j=1}^n \gamma(p_j,r) \\ \leq \bigcup_{i=1}^{\vert P' \vert} L_{p_i}^r + n ( \bigcup_{i=1}^{\vert P'+1 \vert} L_{p_i}^r - \bigcup_{i=1}^{\vert P' \vert} L_{p_i}^r)
\end{align*}
}
{\scriptsize
\begin{align*}
\bigcup_{i=1}^{\vert P^* \vert} L_{p_i}^r - \bigcup_{i=1}^{\vert P' \vert} L_{p_i}^r \leq n ( \bigcup_{i=1}^{\vert P'+1 \vert} L_{p_i}^r - \bigcup_{i=1}^{\vert P' \vert} L_{p_i}^r)
\end{align*}
}

We can then rearrange the previous equation as:
{\scriptsize
\begin{align*}
\bigcup_{i=1}^{\vert P^* \vert} L_{p_i}^r - \bigcup_{i=1}^{\vert P' \vert} L_{p_i}^r \leq n ((  \bigcup_{i=1}^{\vert P^* \vert} L_{p_i}^r - \bigcup_{i=1}^{\vert P' \vert} L_{p_i}^r) - (\bigcup_{i=1}^{\vert P^* \vert} L_{p_i}^r - \bigcup_{i=1}^{\vert P'+1 \vert} L_{p_i}^r)) \end{align*}
}
and we use  $\delta_i$ to represent $\bigcup_{i=1}^{\vert P^* \vert} L_{p_i}^r - \bigcup_{i=1}^{\vert P' \vert} L_{p_i}^r$ so we can rewrite:
$\delta_{i} \leq n (\delta_i - \delta_{i+1})$ and finally $\delta_{i+1} \leq (1-\frac{1}{k}) \delta_i$. So, for every $k \leq n$ we can write  $\delta_{k} \leq (1-\frac{1}{n})^k \delta_0$. Note that $\delta_0 = \bigcup_{i=1}^{\vert P^* \vert} L_{p_i}^r - \bigcup_{i=1}^{\vert \emptyset \vert} L_{p_i}^r = \bigcup_{i=1}^{\vert P^* \vert} L_{p_i}^r$. Moreover, for all  $x\in R$, $1-x \leq e^{-x}$. So finally, we can write that $\delta_{k} \leq (1-\frac{1}{n})^k \bigcup_{i=1}^{\vert P^* \vert} L_{p_i}^r \leq e^{-\frac{k}{n} \bigcup_{i=1}^{\vert P^* \vert} L_{p_i}^r}$. By substituting $\delta_k$ with  $\bigcup_{i=1}^{\vert P^* \vert} L_{p_i}^r - \bigcup_{i=1}^{\vert P^k \vert} L_{p_i}^r $, rearranging and finally replacing $\bigcup_{i=1}^{\vert P^k \vert} L_{p_i}^r $ with its result $L_k$ and $\bigcup_{i=1}^{\vert P^* \vert} L_{p_i}^r $ with $L^*$, we have:
$L_k \geq (1-e^{-\frac{k}{n}})L^*$ and for $l=k$ (lower bound) we have:
$L_k \geq (1- \frac{1}{e})L^*$.
\end{proof}

%\begin{lemma}
%Given a centroid $c_i$ for a cluster $C_i$, when $n\to\infty$ (all API calls retrieve $M$ locations), \opt has an upper bound of $\pi r^2 \frac{\alpha^2}{\alpha^2 - 1}$ of the surface queried.
%\label{lemma:*bound}
%\end{lemma}
%The proof is similar to Lemma \ref{theorem:bound}.

%\begin{lemma}
%\opt performs less requests than \rec.
%\end{lemma}
%\begin{proof}
%The number of requests that \opt performs depend on the number of initial clusters and the number of the times the recursive procedure is called. So, the maximal number of requests will be equal to the count of clusters created during each step $\sum_{i=1}^n \lvert C_i \rvert $. These number of clusters $\lvert C_i \rvert$ discovered recursively in a step $i$ can never be more than the number of points $\lvert P \rvert $ so $\sum_{i=1}^n \lvert C_i \rvert \leq n* \lvert P \rvert $. Thus, \opt has always less requests than \rec.

%\end{proof}

\section{Experiments}

In this section, we test our approach on the sources presented in Section \ref{sec:limitations}  and compare with the existing baselines.

\subsection{ MSSD Experiments}
\label{sec:experiments}
We run \seed\ algorithms using Krak as the seed source as it is the richest in terms of locations, points, and categories.
We compare the results of the baseline (the initial locations of sources \un) to  \fix\ which uses a fixed radius of 2 km (Alg. \ref{alg:seed}), \den\ with a density-based approach to define the radius (Alg. \ref{alg:densitybased}), \nn\     with a nearest-neighbor method to define a flexible radius (Alg. \ref{alg:nearestneighbor}) and \rec\ with a recursive method that starts with a radius of 16 km (the largest values accepted by all sources) and reduces the radius by a coefficient $\alpha$ = 2 (Alg. \ref{alg:recursive}). Some APIs allow only an integer radius in the granularity of km so $\alpha$ = 2 is the smallest integer value accepted. Fig. \ref{fig:barsseeddriven} illustrates the improvement in the extracted data volume from each source by each version of \general\ over \un. Krak is not included since it is the \emph{seed}.
Google Places (GP) has the highest improvement of 98.4 times more locations extracted by \rec. Flickr had 4,084 locations initially, which become 4.3 times more with \fix\ and above 9.5 times more with \den, \nn\ and \rec. In Foursquare (FSQ) and Yelp, \fix\ extracts 3 and 2 times more locations respectively but \den, \nn, and \rec\ retrieve up to 3.5. Twitter returns 10.7 times more with \rec\ but still has a low number of locations overall.  These values show very good results regarding the assumption that \emph{in spite of their different scopes, all the sources relate to the same physical world}. \rec\ performs the best with an improvement of 14.3 times more than \un\ and outperforms all other versions because the radius is recursively adapted to the source.

%Fig. \ref{fig:versionsCompare} summarizes the spatial and temporal improvement of the \general\ versions over the baseline of \un. \fix\ has an improvement of 4.3 times more locations than \un. Despite the fact that \den\ and \nn\ have no knowledge about the source, only using the seed they retrieve 10.3 and 11.1 times more locations over \un. \nn\ is more flexible in choosing the radius, therefore it outperforms \den. These values show very good results regarding the philosophy behind the seed-driven algorithm: \emph{in spite of their different scopes, all the sources relate to the same physical world}. \rec\ performs the best with an improvement of 14.3 times more than \un\ and outperforms all other versions of \general\ because the radius is recursively adapted to the source. 

\iffalse
\begin{figure}[htp]
\begin{subfigure}[b]{0.5\linewidth}
    \centering
    \includegraphics[width=\linewidth]{allC.eps}
    \caption{Spatial} 
    \label{fig:spatial}
   % \vspace{2ex}
  \end{subfigure}%%
  \begin{subfigure}[b]{0.5\linewidth}
    \centering
    \includegraphics[width=\linewidth]{tempC.eps}
    \caption{Temporal} 
   \label{fig:temporal}
   %\vspace{2ex}
  \end{subfigure} 
  \caption{Spatial and temporal improvement} 
  \label{fig:versionsCompare}
 
\end{figure}

\fi
  \begin{figure}[htp]
    \centering
 \includegraphics[width=0.65\linewidth]{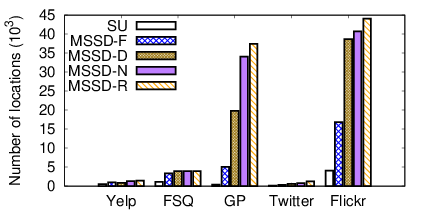}
   \caption{Number of locations extracted per source}
    \label{fig:barsseeddriven}

\end{figure}

%Similarly, the versions that adapt the radius outperform the fixed radius ones in terms of the temporal coverage of the data (Fig. \ref{fig:temporal}). \rec\ has the best coverage, followed by \nn\ and \den. It should be pointed out that \emph{all the seed-driven approaches contribute to getting more historical data compared to the baseline}. 
%\rec\ performs the best compared to all \seed\ versions but it uses extra requests until it fixes the radius. The extra requests of \rec\ vary from 1.5-2 times more for Yelp, Foursquare, and Flickr to 8 times more in Twitter and Google Places.  
%In the next section, we use the optimized version \opt\ to minimize the requests.

\begin{figure}[htb]
\begin{subfigure}[b]{0.5\linewidth}
    \centering
    \includegraphics[width=\linewidth]{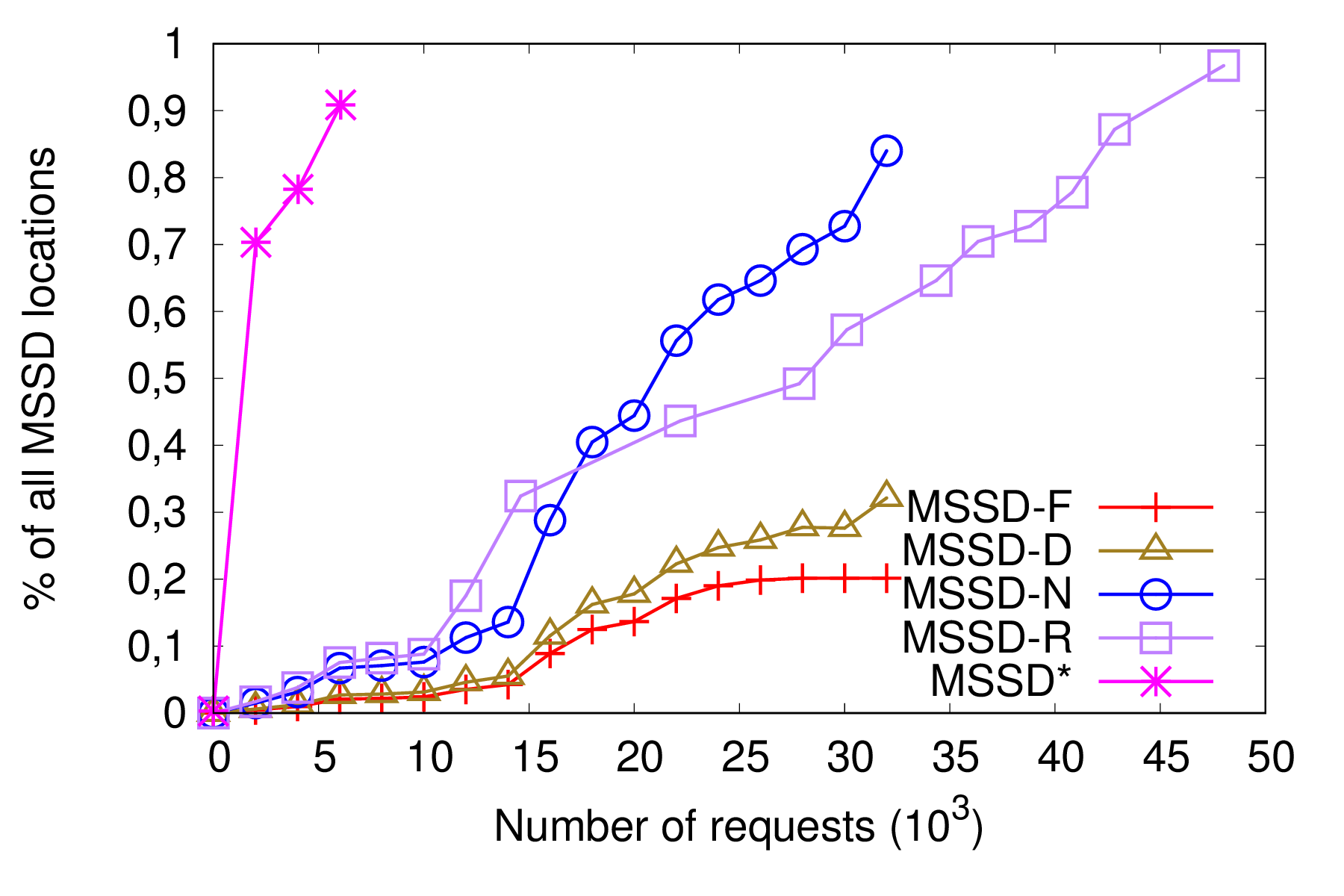}
    \caption{Yelp} 
    \label{fig:yelpopt}
    
  \end{subfigure}%%
  \begin{subfigure}[b]{0.5\linewidth}
    \centering
    \includegraphics[width=\linewidth]{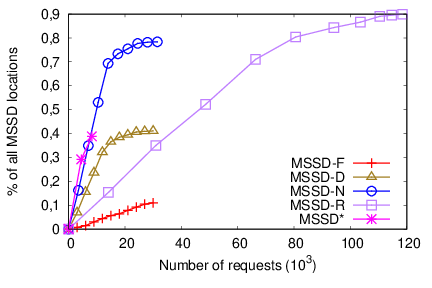}
    \caption{Google Places} 
   \label{fig:gpopt}
   
  \end{subfigure} %%
  
  \begin{subfigure}[b]{0.5\linewidth}
    \centering
    \includegraphics[width=\linewidth]{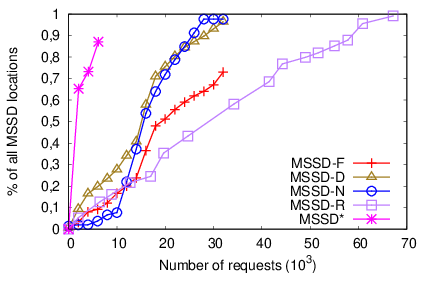}
      \caption{Foursquare} 
   \label{fig:fsqopt}
   
  \end{subfigure}\begin{subfigure}[b]{0.5\linewidth}
    \centering
    \includegraphics[width=\linewidth]{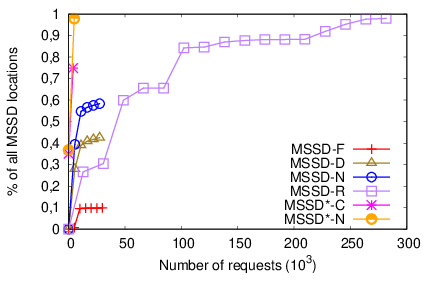}
      \caption{Twitter} 
   \label{fig:twitteropt}
   
  \end{subfigure} %% 
  \begin{subfigure}[b]{\linewidth}
    \centering \includegraphics[width=0.5\linewidth]{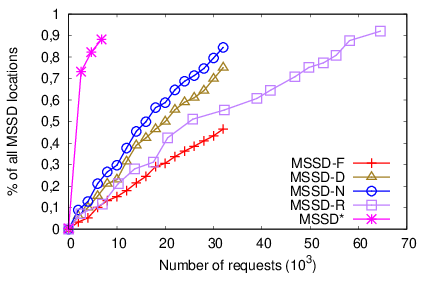}
      \caption{Flickr} 
   \label{fig:tflickropt}
   
  \end{subfigure} %%

  \caption{Requests versus locations for different \general\ algorithms with Krak as seed} 
  \label{fig:optversions}

\end{figure}

\begin{table*}[htb]
\centering
\scriptsize
\begin{tabular}{@{}|ll|llllllll|lllll|@{}}
\toprule
\multicolumn{1}{|l|}{\multirow{2}{*}{\textbf{Sources}}}   & \multirow{2}{*}{\textbf{\begin{tabular}[c]{@{}l@{}}Req\\ vs loc\end{tabular}}} & \multicolumn{8}{c|}{\textbf{Alpha}}                                   & \multicolumn{5}{c|}{\textbf{Radius}}       \\ \cmidrule(l){3-15} 
\multicolumn{1}{|l|}{}                                    &                                                                                & 2      & 4      & 6      & 8      & 10     & 12     & 14     & 16     & 1      & 4      & 8      & 12     & 16     \\ \midrule
\multicolumn{1}{|l|}{\multirow{2}{*}{\textbf{FSQ}}}       & \% req                                                                         & 13.2\% & 13.0\% & 12.9\% & 12.9\% & 12.9\% & 12.9\% & 12.9\% & 12.9\% & 12.8\% & 12.9\% & 13.1\% & 13.3\% & 13.2\% \\
\multicolumn{1}{|l|}{}                                    & \% loc                                                                         & 88.3\% & 78.1\% & 75.4\% & 74.8\% & 73.0\% & 71.5\% & 70.3\% & 69.5\% & 43.2\% & 82.9\% & 86.4\% & 88.1\% & 88.3\% \\ \midrule
\multicolumn{1}{|l|}{\multirow{2}{*}{\textbf{Flickr}}}    & \% req                                                                         & 15.8\% & 15.4\% & 15.2\% & 15.1\% & 15.0\% & 15.0\% & 15.0\% & 15.0\% & 13.2\% & 14.4\% & 15.3\% & 15.7\% & 15.8\% \\
\multicolumn{1}{|l|}{}                                    & \% loc                                                                         & 96.5\% & 91.9\% & 86.9\% & 85.1\% & 82.9\% & 81.4\% & 80.3\% & 79.0\% & 49.1\% & 88.5\% & 94.8\% & 95.8\% & 96.5\% \\ \midrule
\multicolumn{1}{|l|}{\multirow{2}{*}{\textbf{GP}}}        & \% req                                                                         & 9.3\%  & 9.0\%  & 8.8\%  & 8.7\%  & 8.6\%  & 8.5\%  & 8.5\%  & 8.4\%  &   7.6\%  & 9.0\%  & 9.3\%  & 9.3\%  & 9.3\%   \\
\multicolumn{1}{|l|}{}                                    & \% loc                                                                         &38.4\% & 34.3\% & 32.9\% & 32.4\% & 31.6\% & 30.9\% & 30.4\% & 30.1\% & 33.6\% & 37.2\% & 37.3\% & 38.2\% & 38.4\% \\ \midrule
\multicolumn{1}{|l|}{\multirow{2}{*}{\textbf{Yelp}}}      & \% req                                                                         & 17.5\% & 17.4\% & 17.4\% & 17.4\% & 17.4\% & 17.4\% & 17.4\% & 17.4\% & 17.3\% & 17.4\% & 17.4\% & 17.4\% & 17.5\% \\
\multicolumn{1}{|l|}{}                                    & \% loc                                                                         & 98.2\% & 95.8\% & 93.7\% & 95.1\% & 93.4\% & 90.6\% & 89.8\% & 89.7\% & 51.1\% & 94.2\%	& 97.7\% & 98.3\% & 98.2\% \\ \midrule
\multicolumn{1}{|l|}{\multirow{2}{*}{\textbf{Twitter-C}}} & \% req                                                                         & 3.0\%  & 3.0\%  & 3.0\%  & 3.0\%  & 3.0\%  & 3.0\%  & 3.0\%  & 3.0\%  & 3.0\%  & 3.0\%  & 3.0\%  & 3.0\%  & 3.0\%  \\
\multicolumn{1}{|l|}{}                                    & \% loc                                                                         & 76.8\% & 58.6\% & 58.1\% & 57.1\% & 55.7\% & 52.8\% & 52.3\% & 52.3\% & 53.4\% & 61.5\% & 60.0\% & 58.9\% & 76.8\% \\ \midrule
\multicolumn{1}{|l|}{\multirow{2}{*}{\textbf{Twitter-N}}} & \% req                                                                         & 3.0\%  & 3.0\%  & 3.0\%  & 3.0\%  & 3.0\%  & 3.0\%  & 3.0\%  & 3.0\%  & 3.0\%  & 3.0\%  & 3.0\%  & 3.0\%  & 3.0\%  \\
\multicolumn{1}{|l|}{}                                    & \% loc                                                                         & 99.5\% & 99.4\% & 99.0\% & 98.6\% & 98.3\% & 98.2\% & 97.9\% & 97.8\% & 86.4\% & 97.0\% & 97.1\% & 98.4\% & 99.5\% \\ \bottomrule
\end{tabular}
\caption{\% of requests versus \% of locations for \opt\ relative to \rec\ for different values of $\alpha$ and $r$}
\label{tab:alphatable}

\end{table*}

\begin{figure*}[htb]
\begin{subfigure}[b]{0.2\linewidth}
    \centering
    \includegraphics[width=\linewidth]{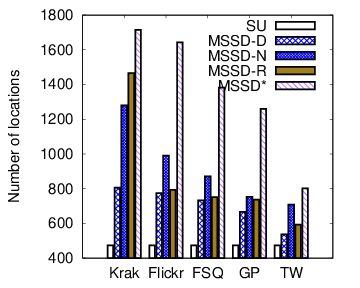}
    \caption{Yelp} 
    \label{fig:yelpM}
   % \vspace{2ex}
  \end{subfigure}%%
  \begin{subfigure}[b]{0.2\linewidth}
    \centering
    \includegraphics[width=\linewidth]{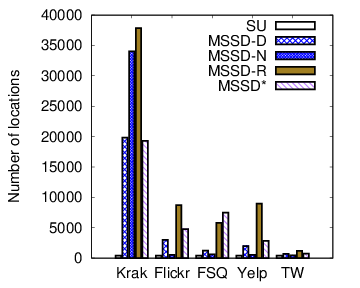}
    \caption{GooglePlaces} 
   \label{fig:gpM}
   %\vspace{2ex}
  \end{subfigure} 
  \begin{subfigure}[b]{0.2\linewidth}
    \centering
    \includegraphics[width=\linewidth]{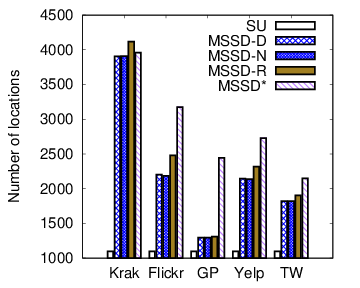}
    \caption{Foursquare} 
   \label{fig:fsqM}
   %\vspace{2ex}
  \end{subfigure}\begin{subfigure}[b]{0.2\linewidth}
    \centering
    \includegraphics[width=\linewidth]{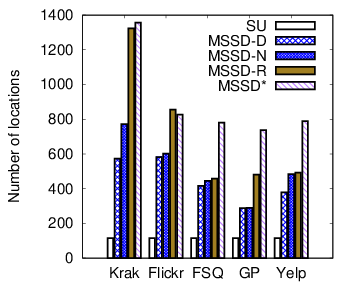}
    \caption{Twitter} 
   \label{fig:twM}
   %\vspace{2ex}
  \end{subfigure}\begin{subfigure}[b]{0.2\linewidth}
    \centering
    \includegraphics[width=\linewidth]{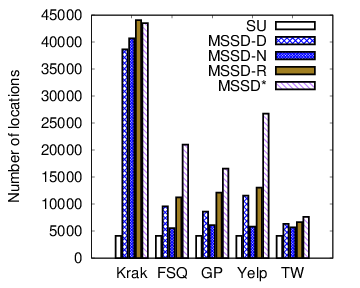}
    \caption{Flickr} 
   \label{fig:flickrM}
   %\vspace{2ex}
  \end{subfigure}
  
  \caption{ \general\ results with different seeds for all sources} 
  \label{fig:severalseed}

\end{figure*}

\rec\ performs the best compared to all \seed\ versions but it uses extra requests until it fixes the radius. The extra requests of \rec\ vary from 1.5-2 times more for Yelp, Foursquare, and Flickr to 8 times more in Twitter and Google Places.  We ran the optimized version  \opt\ (Alg. \ref{alg:seedoptimized}) for each of the sources with initial radius of 16 km and initial $m$ = 10 and $\epsilon$ = 500 meters as parameters of DBSCAN.  $m$, $\epsilon$ and $r$ are recursively reduced by $\alpha$ = 2.  We compared \fix, \den, \nn, \rec\ and \opt\ regarding the number of requests performed and the locations retrieved. The results for each source are presented in Fig. \ref{fig:optversions}. The number of requests is in the x-axis whereas the number of locations  is expressed as the percentage of the total of distinct locations extracted by all methods. \rec\ provides the highest percentage of locations (above 95\%) for all the sources but considerably more requests. For instance,  for Google Places  (Fig. \ref{fig:gpopt}) and for Twitter (Fig. \ref{fig:twitteropt}), \rec\ need respectively 3.8 and 8.7 times more requests than the \general\ versions with fixed number of requests (\fix, \den, \nn). For the same number of requests, \nn\ provides a higher percentage of locations compared to \fix\ and \den\ for all the sources. \opt\ is the most efficient in terms of requests. For all sources except Google Places, \opt\ gets around 90\% of the locations with around 25\% of the requests of \fix, \den\, and \nn. With regard to \rec, \opt\ uses 12\%-15\% of \rec\ requests for Flickr, Yelp and Foursquare, 8.5\% of \rec\ requests for Google Places and 2.7\% of \rec\ requests for Twitter. In Google Places, \opt\ is able to retrieve only 40\% of the locations. This result is explained by the small result size of Google (Table \ref{tab:categories}); despite the fact that \opt\ uses the requests wisely in dense areas, a request can retrieve at most 20 locations. \nn\ extracts 2 times more locations than \opt\ but with 3 times more requests. As previously mentioned in Section \ref{sec:opt}, we propose two versions of \opt\ in the case of Twitter: a centroid-based (\emph{MSSD*-C}) and a nearest point method (\emph{MSSD*-N}). \emph{MSSD*-C} retrieves 20\% more locations than \emph{MSSD*-N} using the same number of requests. To conclude, \opt\ guarantees the best compromise for all the sources. It is able to quickly reach a high percentage of locations with remarkably fewer requests.

\textbf{Setting $\alpha$ and radius $r$}

So far, \opt\ used values of $\alpha$ (the smallest integer to reduce the radius) and $r$ (the largest radius accepted by all sources) that guarantee a slow and relatively full exploration of the area in exchange for more requests. In this section, we test different values of $\alpha$ and $r$ for \opt. When $\alpha$ is bigger or $r$ is smaller, fewer requests are performed, some areas are missed, and consequently, fewer locations are retrieved. Table \ref{tab:alphatable} provides the trade-offs in terms of percentage of requests and percentage of locations of \opt\ with regards to \rec\ for each $\alpha$ (while fixing the radius at 16 km) and for each $r$ (while fixing  $\alpha$ at 2)(Section B.2 in Appendix for more details). In all the cases, it is obvious that the extra requests of \opt\ with small values of $\alpha$ are rewarded with a higher percentage of locations. For example, for 0.3\% more requests, we retrieve 18.8\% more locations in Foursquare. In Flickr, for 0.8\% more requests, we retrieve 17.5\% more locations. Similarly, starting with a big radius is safer and more rewarding. For instance, Foursquare and Yelp perform less than 0.4\% of the requests to get around 46\% more locations when starting with $r$=16km compared to $r$=1km.  \emph{A risk-averted selection of parameters turns out to provide a good trade-off between the number of requests and number of locations because \opt\ adapts to the density of the region and still manages the requests carefully}. Thus, the algorithm is robust to different parameter settings and fine-tuning is not needed.

\begin{figure*}[htb]
\begin{subfigure}[b]{0.2\linewidth}
    \centering
    \includegraphics[width=\linewidth]{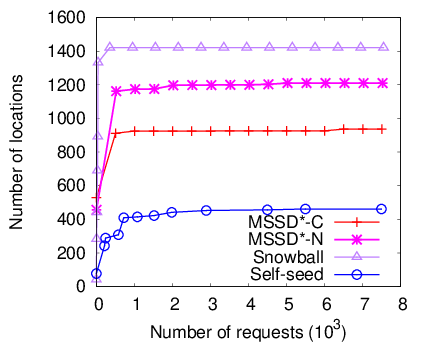}
    \caption{Twitter} 
    \label{fig:snowballtwitter}
   % \vspace{2ex}
  \end{subfigure}%%
  \begin{subfigure}[b]{0.2\linewidth}
    \centering
    \includegraphics[width=\linewidth]{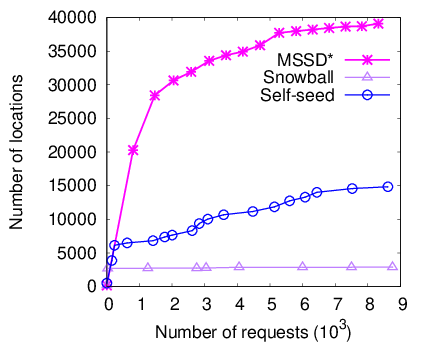}
    \caption{Flickr} 
   \label{fig:snowballFlickr}
   %\vspace{2ex}
  \end{subfigure} 
  \begin{subfigure}[b]{0.2\linewidth}
    \centering
    \includegraphics[width=\linewidth]{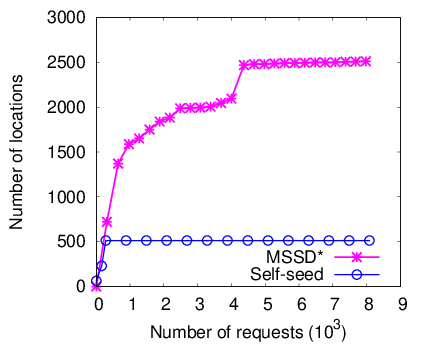}
    \caption{Foursquare} 
   \label{fig:baselinesFSQ}
   %\vspace{2ex}
  \end{subfigure}\begin{subfigure}[b]{0.2\linewidth}
    \centering
    \includegraphics[width=\linewidth]{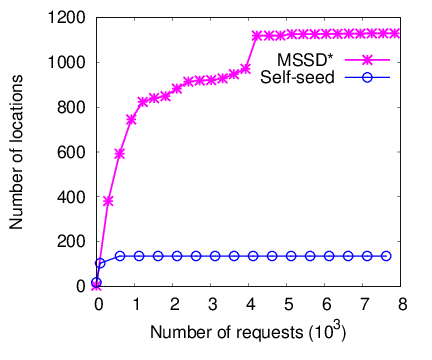}
    \caption{Yelp} 
   \label{fig:baselinesYelp}
   %\vspace{2ex}
  \end{subfigure}\begin{subfigure}[b]{0.2\linewidth}
    \centering
    \includegraphics[width=\linewidth]{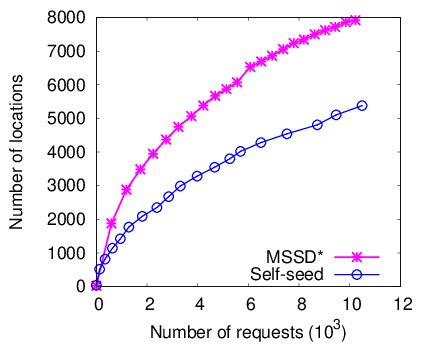}
    \caption{Google Places} 
   \label{fig:baselinesGoogle}
   %\vspace{2ex}
  \end{subfigure}
  \caption{Number of request versus number of locations for \opt, \emph{Snowball} and \emph{Self-seed}} 
  \label{fig:snowball}

\end{figure*}

\textbf{Choosing a different seed}. Previously, we chose Krak as the seed due to its richness. However, Krak is limited to some countries. In order to show that our \seed\ algorithms are applicable to any type of seed (preferable a rich source), we ran \den, \nn, \rec\ and \opt using as seed Flickr, Foursquare, Yelp, Google Places, and Twitter. The seed points are those discovered by the initial querying in Section \ref{sec:datascarcity}. The results are presented in Fig. \ref{fig:severalseed}. Even though Krak performs the best, the other sources which provide significantly fewer seed points (see Table \ref{tab:nrpoints}) are able to achieve comparable results. Recall that Krak has 14 times more seed points than Flickr, 30 times more than Foursquare, 70 times more than Yelp, 91 times more than Google Places and 295 times more than Twitter. Apart from Krak, \opt\ performs the best for Flickr, Yelp, and Foursquare. For Flickr, \opt\ with Yelp seed points retrieves 6.5 times more points than \un. For FSQ, \opt\ with Flickr and Yelp seed points retrieves 2.9 and 2.5 times more points than \un, respectively. For Yelp, \opt\ with Flickr seed points retrieves 3.5 times more data than \un, whereas Krak retrieves 3.6 times more while having 70 times more seed points. For Twitter and Google Places, \rec\ performs the best; 7.4 times more locations than \un\ with Flickr seed points in Twitter and 23.6 times more locations with Yelp seed points in Google Places. The performance of each algorithm in terms of the number of requests versus the number of locations can be found in Section B.3 in Appendix. An interesting observation is that \opt\ sometimes performs better than \rec. When we experimented with Krak, given its variety of seed points, the need to change the center of the queried area in the next call was not needed. However, \opt\ adapts very well even when given  a seed that is not rich. In the next recursive call, \opt\ uses the result of the previous call and the seed points in the area to select the next centers of the recursive calls (see Alg. \ref{alg:seedoptimized} for more details). \emph{Even when the seed source is not rich, \opt\ manages to achieve good results due to its ability to adapt the next call according to the distribution of the source.}

\iffalse

\begin{figure}[htb]
\begin{subfigure}[b]{0.5\linewidth}
    \centering
    \includegraphics[width=\linewidth]{yelpM.png}
    \caption{Yelp} 
    \label{fig:yelpM}
    
  \end{subfigure}%%
  \begin{subfigure}[b]{0.5\linewidth}
    \centering
    \includegraphics[width=\linewidth]{gpM.png}
    \caption{Google Places} 
   \label{fig:gpM}
   
  \end{subfigure} %%
  
  \begin{subfigure}[b]{0.5\linewidth}
    \centering
    \includegraphics[width=\linewidth]{fsqM.png}
      \caption{Foursquare} 
   \label{fig:fsqM}
   
  \end{subfigure}\begin{subfigure}[b]{0.5\linewidth}
    \centering
    \includegraphics[width=\linewidth]{twM.png}
      \caption{Twitter} 
   \label{fig:twitterM}
   
  \end{subfigure} %% 
  \begin{subfigure}[b]{\linewidth}
    \centering \includegraphics[width=0.5\linewidth]{flickrM.png}
      \caption{Flickr} 
   \label{fig:flickrM}
   
  \end{subfigure} %%

  \caption{Results \general\ algorithms with different seeds} 
  \label{fig:severalseed}

\end{figure}
\fi

\textbf{Elapsed time of the experiments}. All \general\ versions are executed on live data sources, so the elapsed time is more important than the CPU time because of the \emph{bandwidth limitations} of the APIs. For instance, if the limit of a source is reached, then the thread has to sleep until the limit is reset. Our experiments were run simultaneously for all the sources in order to avoid the temporal bias, therefore all bandwidth limitations were respected at the same time. Subsequently, the elapsed time for all the sources is maximal, specifically around 1 week for the \fix, \den\ and \nn, 2 weeks for \rec\ and 1.7 days for \opt. If the algorithms are run independently, it takes on average 1 day per source for \fix, \den\, and \nn, 2 days for \rec\ and less than 6 hours for \opt. 

\subsection{Comparison with Existing Baselines}

The technique using \emph{linked accounts} \cite{hristova2016measuring,preoctiuc2013mining,armenatzoglou2013general} requires users that have declared their account in another social network, who are rare to find. From our initial querying of the sources, there were only 0.27 \% of users on Flickr with linked accounts to Twitter and  0.003 \% of users on Twitter with linked accounts to Foursquare. Since the percentage of linked accounts is unfeasibly low, a comparison with this technique makes little sense. The \emph{keyword-based} querying shows little applicability in location-based data retrieval. We conducted a small experiment using the names of cities and towns in North Denmark as keywords.
For Flickr and Twitter the precision (\% of data that falls in the queried region) was just 31.6\% and 0.85\% respectively, while the recall was less than 5\%, relative to \opt. 
 Foursquare and Yelp offer a query by term or query by location expressed as a string. The former [query by term] does not retrieve any data when queried with a city or town name. If we express the location as a string, the precision is 93\% and 85\% for Foursquare and Yelp respectively and the recall is less than 19\%, relative to \opt. In Google Places, the data retrieved is the towns and the cities themselves. For example, if we query with the keyword "Aalborg", the API will return Aalborg city only and not any other places located in Aalborg (1 request per 1 location). Even though the precision is 100\%, the recall is only 0.07\% relative to \opt.   Thus, we compare to \emph{Snowball} and to the technique mentioned in \cite{lee2010measuring}. 
\begin{table}[htb]
\centering
\scriptsize
\begin{tabular}{@{}lllll@{}}
\toprule
        &          & Nr of locations & Nr of users & Time period \\ \midrule
Twitter & Snowball & 1421                & 35              & 2015-2018  \\
        & Self-seed  & 461                 & 101             & 2017-2018    \\
        & MSSD*-C  & 936                 & 195             & 2017-2018   \\
        & MSSD*-N  & 1237                & 231             & 2017-2018   \\ \hline
Flickr  & Snowball & 2885                & 46              & 2005-2018   \\  
        & Self-seed   & 14910               & 1007            & 2005-2018 \\
        & MSSD*    & 39427               & 1740            & 2005-2018    \\
        
        \bottomrule
\end{tabular}
\caption{\emph{Snowball}, \emph{Self-seed} and \opt\ results}
\label{tab:snowball}
\end{table}

\begin{table*}[htb]
\centering
\scriptsize
\begin{tabular}
{@{}|l|l|lll|lll|lll|@{}}
\toprule
\textbf{Source}                  & \textbf{Algorithm}  & \textbf{\begin{tabular}[c]{@{}l@{}}Real \\ 80\%\end{tabular}} & \textbf{\begin{tabular}[c]{@{}l@{}}Synthetic\\ 20\%\end{tabular}} & \textbf{Req} & \textbf{\begin{tabular}[c]{@{}l@{}}Real\\ 70\%\end{tabular}} & \textbf{\begin{tabular}[c]{@{}l@{}}Synthetic\\ 30\%\end{tabular}} & \textbf{Req} & \textbf{\begin{tabular}[c]{@{}l@{}}Real\\ 50\%\end{tabular}} & \textbf{\begin{tabular}[c]{@{}l@{}}Synthetic\\ 50\%\end{tabular}} & \textbf{Req} \\ \midrule
\multirow{2}{*}{\textbf{Flickr}} & \rec & 85.62\%                                                       & 87.16\%                                                           &              & 84.44\%                                                      & 85.70\%                                                           &              & 82.71\%                                                      & 83.70\%                                                           &              \\
                                 & \opt & 64.91\%                                                       & 66.22\%                                                           & 10.16\%      & 63.59\%                                                      & 65.01\%                                                           & 10.12\%      & 62.12\%                                                      & 64.28\%                                                           & 9.80\%       \\ \midrule
\multirow{2}{*}{\textbf{Yelp}}   & \rec & 97.17\%                                                       & 99.44\%                                                           &              & 97.10\%                                                      & 99.4\%                                                            &              & 97.10\%                                                      & 99.23\%                                                           &              \\
                                 & \opt & 72.23\%                                                       & 77.81\%                                                           & 12.74\%      & 68.89\%                                                      & 76.94\%                                                           & 12.49\%      & 66.42\%                                                      & 72.59\%                                                           & 11.78\%      \\ \midrule
\multirow{2}{*}{\textbf{FSQ}}    & \rec & 94.27\%                                                       & 96.69\%                                                           &              & 94.19\%                                                      & 95.51\%                                                           &              & 94.07\%                                                      & 95.26\%                                                           &              \\
                                 & \opt & 67,75\%                                                       & 74,57\%                                                           & 10.67\%      & 66,15\%                                                      & 70,34\%                                                           & 10.62\%      & 63,60\%                                                      & 68,65\%                                                           & 10.45\%     \\ \midrule
\multirow{2}{*}{\textbf{GP}}    & \rec & 92,42 \%                                                       & 78,76\%                                                           &              & 91,60\%                                                      & 76,15\%                                                           &              & 89,94\%                                                      & 69,33\%                                                           &              \\
                                 & \opt & 34,48\%                                                       & 35,76\%                                                           & 13,85\%      & 45,80\%                                                      & 33,59\%                                                           & 9,14\%      & 38,09\%                                                      & 33,99\%                                                           & 13,49\%  \\ \midrule
\multirow{3}{*}{\textbf{Twitter}}    & \rec & 81,16 \%                                                       & 97,70\%                                                           &              & 81,16\%                                                      & 95,66\%                                                           &              & 80,92\%                                                      & 87,23\%                                                           &              \\
                                 & \emph{MSSD*-C} & 45,30\%                                                       & 63,31\%                                                           & 2,45\%      & 44,37\%                                                      & 62,09\%                                                           & 2,44\%      & 48,88\%                                                      & 68,51\%                                                           & 2,45\%    \\ & \emph{MSSD*-N} & 69,79\%                                                       & 97,70\%                                                           & 2,53\%      & 69,11\%                                                      & 94,51\%                                                           & 2,53\%      & 60,90\%                                                      & 86,61\%                                                           & 2,53\%    \\ \bottomrule
\end{tabular}
\caption{\rec\ and \opt\ performance compared to the ground truth}
\label{table:groundtruth}
\end{table*}

 \emph{Snowball}  (\cite{gao2015content,scellato2010distance}) starts with a seed of users and then traverses their network while extracting user data. In order to compare with \emph{Snowball}, we formed the seed with the users found in Section \ref{sec:limitations}. We used the same number of requests for \emph{Snowball} and \opt.   \emph{Snowball} is based on users, and consequently, it can be applied only to Twitter and Flickr.  Foursquare could not be included since the API no longer provides the check-in data extraction from users unless the crawling user has checked in himself at the venue. The technique mentioned in \cite{lee2010measuring} (we will refer to it as \emph{Self-seed}) starts with querying a specific location to get initial points. Later, other requests are performed using the seed points of the previous step. We ran \emph{Self-seed} on all our sources for the same number of requests as \opt\ (results in Fig. \ref{fig:snowball}).

It is important to note that the  Twitter API result size is 200 tweets for user-based queries and 100 for location-based ones. Moreover, the historical access of the location-based API is 2 weeks whereas for the user-based queries it is unlimited.  As a result, \emph{Snowball} in Twitter retrieves more locations in the region than versions of \opt\ (\emph{MSSD*-C} and \emph{MSSD*-N}) and \emph{Self-seed} (Fig. \ref{fig:snowballtwitter}). In the case of Flickr, \opt\ outperforms \emph{Snowball} and \emph{Self-seed} with 14 and 3 times more locations respectively. \emph{Snowball} gets most of the data in the region in the beginning and then the improvement is quite small because \emph{when using Snowball, while we traverse the network (friends of friends and so on), there is more and more data which falls outside the region of interest}. \opt\ yields a higher number of locations compared to \emph{Self-seed} as well: 5.5 times more locations for Foursquare, 9 times more locations for Yelp and 3.5 times more locations for Google Places. Note that \emph{Self-seed} in the case of directories stops yielding new locations after approximately 500 requests.  \emph{In the case of directories, after some steps, the seed points in Self-seed stop growing, converging into a dead end}. Recall that Google Places has a result size of 20 and is denser in terms of data so it has new locations for the coming steps, avoiding thus the dead end convergence. The number of users and the time period covered are presented in Table \ref{tab:snowball}. Despite the slight advantage of \emph{Snowball} in Twitter in terms of the number of locations, the data comes only from 35 users compared to 231 for \emph{MSSD*-N} and 101 for \emph{Self-seed}. Moreover, the period of time covered by the tweets in \emph{Snowball} is 3 years compared to 1 year of \opt\ versions. Regarding Flickr, the time period of the photos is the same but the number of photos and the number of users are 1-2 orders of magnitude larger for \opt\ compared to \emph{Snowball}. \emph{Self-seed} retrieves a better variety of users and locations compared to \emph{Snowball} but still contains only half the number of locations and users of \opt.

\subsection{MSSD-R and MSSD* Result Completeness}

Obtaining all locations from the sources is infeasible given the API limitations. We thus cannot get the actual ground truth of source locations. Instead, we have to approximate it. We performed the following experiment: first, we union all the locations sets from all our algorithms (\un, \fix, \den, \nn, \rec\ and \opt) to create a dataset of real data; second, we learn the distribution $\mathcal{D}$ of the locations by dividing the area in a grid of 1km x 1km and assigning each grid cell $d$ a probability $p_d \sim \mathcal{D}_d$; third, we generate synthetic locations in the area and  assign them to a grid cell $d$ with the estimated probability $p_d$. We consider the synthetic and the real data as ground truth. We implemented "simulated offline" API functions for each source, respecting the maximal result size for each of them. We ran our \rec\ and \opt\ on the ground truth data for different ratios of synthetic data as in the Table \ref{table:groundtruth}. The data retrieved by \rec\ is above 94\% of the ground truth in Yelp and Foursquare and above  80\% of the ground truth in Flickr, Google Places and Twitter. \opt\ performs the best in Yelp and Twitter (\emph{MSSD*-N}) with above 70\% of the ground truth for all ratios of real versus synthetic data, followed by Foursquare and Flickr with above 64\%.  What is more important, \emph{\rec\ and \opt\ are seen to be robust regardless of the ratio of synthetic to real data}. Despite the fact that \opt\ retrieves less than \rec, it is important to note that this result is achieved using only around 10\% of the requests of \rec. In the case of Google Places, \opt\ gets around 40\% of the ground truth because even though \opt\ uses the requests wisely to move to dense areas, it still gets only 20 locations per request. In the case of Twitter, \emph{MSSD*-N} performs better than \emph{MSSD*-C}. When querying with $p$ and retrieving a polygon $pp$, using the nearest point of $pp$ to $p$ as input for DBSCAN clusters proved to achieve better results than using the centroid of $pp$. Obviously, the centroid of a large polygon might be quite far from the queried area so the quality of the clusters reduces.

\subsection{Discussion of Experiments}

Selecting an external seed of points showed to improve the number of locations retrieved and avoid converging into a dead end like in \emph{Self-seed}. Moreover, the attempts to adapt the radius of the search according to the search region prove to be effective in retrieving more locations. \fix, \den\ and \nn\ extract on average up to 11.1 times more data than \un\ but if we adapt the radius according to the source (\rec), we extract up to 14.3 times more locations than \un. \opt\ provides a very good trade-off between the number of requests and number of locations as \opt\ extracts up to 90\% of the data of \rec\ with less than 16\% of its requests. The economic use of the requests limits the elapsed time to be less than 2 days for all the sources.  Our comparison with the \emph{Snowball} and the \emph{Self-seed} baseline shows that our seed-driven algorithm is better in terms of extracting (i) up to 14 times more locations for all the sources, (ii) in the case of Twitter and Flickr, the activity originates from \emph{a larger base of users} (up to 6.6 times more), and (iii) in the case of directories, our \seed\ avoids converging into a dead end. In a ground truth dataset, for most of the sources, our \rec\ algorithm finds 82 \% - 99 \% of the ground truth, while \opt\ with 10\% of the requests is able to guarantee 63 \% - 73\% of the ground truth. 

\section{Conclusions and future work}

This paper was motivated by the need for an efficient algorithm that extracts recent geo-social data. We formulated the problem of data extraction as an optimization problem which aims to maximize the retrieved locations while minimizing the requests. We identified the API limitations and analyzed the scarcity of the data for six sources: Krak, Yelp, Google Places, Foursquare, Twitter and Flickr. Then, we proposed a seed-driven algorithm that uses the richer source as the \emph{seed} to feed the points as API parameters to the others. \general\ versions extracted up to 14.3 times more data than \un. We further optimized our \general\ algorithm with respect to the radius of search and the query points. Specifically, in our experiments, we proved that  \opt\ is able to retrieve 90\% of the locations with less than 16\% of the requests, outperforming \den\ and \nn. Interesting directions for future research include applying machine learning for data extraction, seed selection based on other criteria (diversity in semantics, maximal spread of points, relation to the source), data integration, and data fusion of location-based data from multiple geo-social sources.

\appendix

\section{Geo-Social Dataset Age}

The challenging data extraction of geo-social data has not discouraged the research in the field. We checked the existing related work and the dataset used. The x-axis shows the year when the article was published and the y-axis indicates the latest year in the dataset. For example, if the dataset covers years 2008-2011 for an article of 2013, then the point is (2013, 2011). The labels on the points are the references in this paper. We noticed that 50\% the existing work, mostly in recent years, has been carried in datasets of 3-8 years older than the published article.

\begin{center}
\begin{tikzpicture} 
\begin{axis}[ grid=both,scale=0.85,
    xmin=2007, xmax=2018,
    ymin=2007, ymax=2018,
    xtick={2008,2010, 2012,2014,2016,2018, 2019},
    ytick={2008,2010, 2012,2014,2016,2018, 2019},
    legend pos=north west,
    ymajorgrids=true,
    grid style=dashed, 
]
\tiny
\pgfplotsset{every tick label/.append style={font=\footnotesize}}
\addplot+[color=black,
    mark=*,nodes near coords,only marks, point meta=explicit symbolic]
table[meta=label] {
x y label
2010 2009 \cite{scellato2010distance}
2012 2011 \cite{zhang2012evaluating} 
2011 2010 \cite{cho2011friendship,  scellato2011exploiting}
2011 2011 \cite{ferrari2011extracting}
2013 2012 \cite{armenatzoglou2013general}
2013 2011 \cite{gao2013exploring,ference2013location,zhang2013igslr}
2010 2008 \cite{crandall2010inferring}
2013 2010 \cite{liu2013point,wang2013location}
2010 2010 \cite{lee2010measuring}
2015 2011 \cite{feng2015personalized}
2017 2011 \cite{saleem2017location,li2017geo}
2015 2015 \cite{weiler2015geo}
2016 2013 \cite{yin2016discovering,yao2016poi}
2012 2010 \cite{noulas2012random}
2009 2008 \cite{li2009analysis}
2014 2010 \cite{emrich2014geo}
2016 2011 \cite{hristova2016measuring,zhao2016stellar}
2016 2010 \cite{cai2016using}
2017 2013 \cite{yu2017friend}
2018 2011 \cite{saleem2018effective}
2017 2010 \cite{li2017geo,zhu2017geo}
2014 2011 \cite{li2014efficient}
2015 2013 \cite{gao2015content}
2015 2014 \cite{jurgens2015geolocation}
           };
\draw [red,dashed] (rel axis cs:0,0) -- (rel axis cs:1,1);
\end{axis} 
\end{tikzpicture}
\end{center}
 \captionof{figure}{The year of the published article (x-axis) versus the latest year in their dataset (y-axis)}

 \section{Additional experiments}

\subsection{MSSD spatial and temporal improvement}

\begin{figure}[htp]
\begin{subfigure}[b]{0.5\linewidth}
    \centering
    \includegraphics[width=\linewidth]{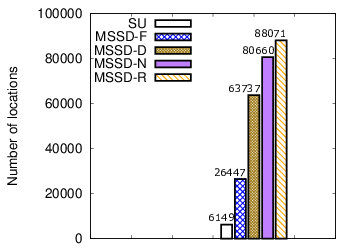}
    \caption{Spatial} 
    \label{fig:spatial}
   % \vspace{2ex}
  \end{subfigure}%%
  \begin{subfigure}[b]{0.5\linewidth}
    \centering
    \includegraphics[width=\linewidth]{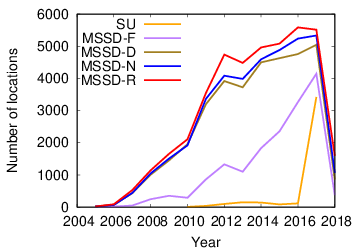}
    \caption{Temporal} 
   \label{fig:temporal}
   %\vspace{2ex}
  \end{subfigure} 
  \caption{Spatial and temporal improvement of  \seed\ } 
  \label{fig:versionsCompare}
 
\end{figure}

In this subsection, we show the spatial and temporal improvement of the \general\ versions over the baseline of \un\ (Fig. \ref{fig:versionsCompare}). We union the results retrieved from all the sources and we compare the overall improvement that each \seed\ version achieves. \fix\ has an improvement of 4.3 times more locations than \un. Despite the fact that \den\ and \nn\ have no knowledge about the source, only using the seed they retrieve 10.3 and 11.1 times more locations over \un. \nn\ is more flexible in choosing the radius, therefore it outperforms \den. \rec\ performs the best with an improvement of 14.3 times more than \un\ and outperforms all other versions of \general\ because the radius is recursively adapted to the source.  Similarly, the versions that adapt the radius outperform the fixed radius ones in terms of the temporal coverage of the data (Fig. \ref{fig:temporal}). \rec\ has the best coverage, followed by \nn\ and \den. It should be pointed out that \emph{all the seed-driven approaches contribute to getting more historical data compared to the baseline}. \rec\ performs the best compared to all \seed\ versions but it uses extra requests until it fixes the radius; 1.5-2 times more for Yelp, Foursquare, and Flickr to 8 times more in Twitter and Google Places. 

\subsection{Requests vs locations for different $\alpha$ and $r$}

In this subsection, we analyze the trade-off between the number of requests and number of locations for \opt\ in each source, for different values of $\alpha$ and $r$, having Krak as seed. The results of experimenting with $\alpha$ are presented in Fig. \ref{fig:optalpha} and with $r$ in Fig. \ref{fig:optradius}. The behavior of \opt\ with different $\alpha$ is similar in Yelp, Flickr and Twitter N but the advantage of using a smaller $\alpha$ is more obvious in  Twitter C and the last requests of Foursquare and Google Places. $\alpha$=2 guarantees a slower exploration of the area while the requests are managed carefully. As noticed by the graphs, the difference in requests is insignificant compared to the gain in the number of locations. Similarly, the bigger radius is a better option for all sources. In Yelp, Foursquare, and Flickr, the use of $r$=1km is obviously not a good choice. When we query with a small radius,  we expect that the source has almost the same distribution as the seed, which is not always the case. The seed points are used as initial points to mark an area, not to indicate exact points. On the contrary, $r$=1km is still a good option for Google Places, suggesting that it might have a similar distribution and density with Krak.

\begin{figure*}[htb]
\centering
\begin{subfigure}[b]{0.3\linewidth}
    \centering
    \includegraphics[width=\linewidth]{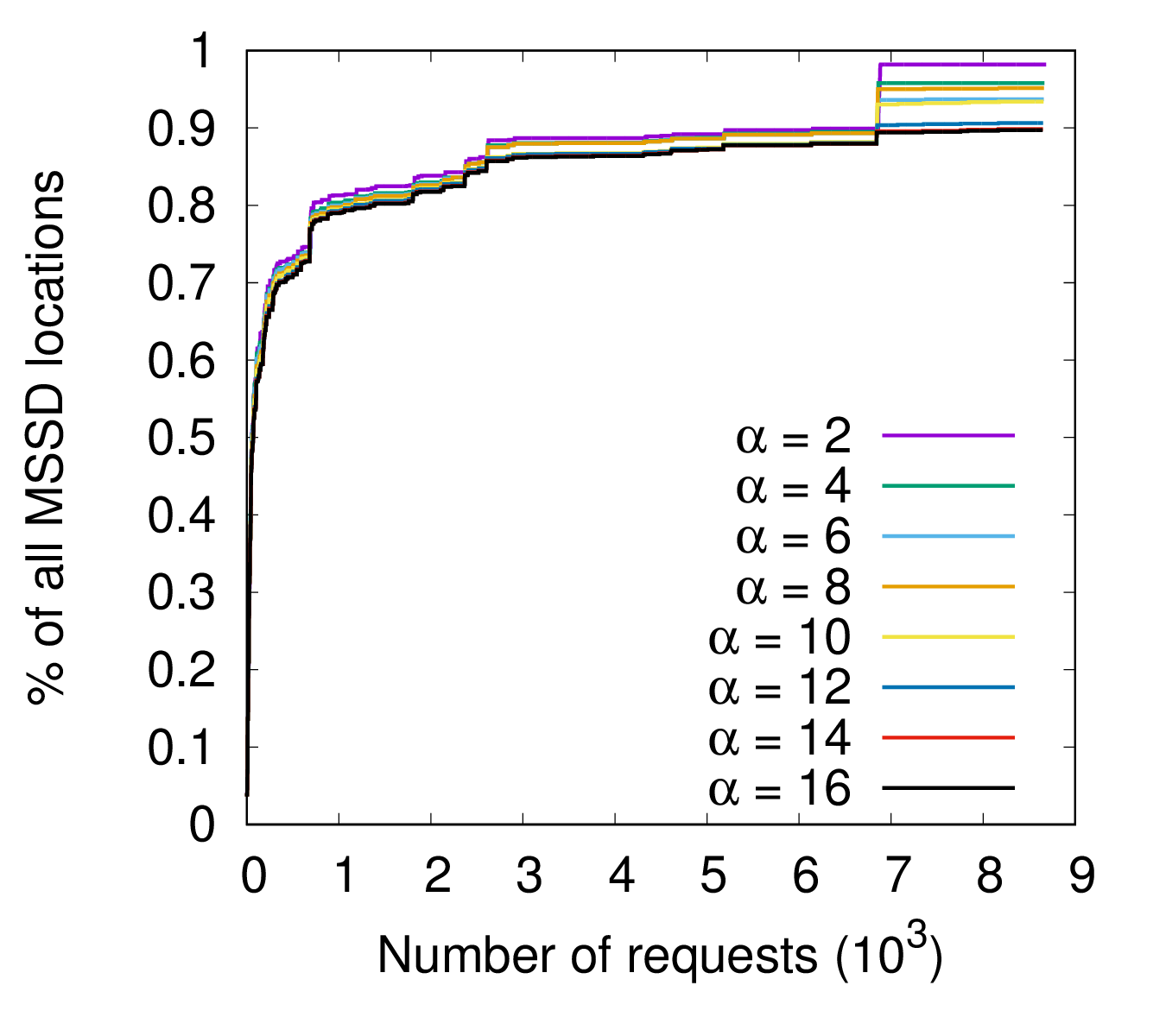}
    \caption{Yelp} 
    \label{fig:yelalpha}
    
  \end{subfigure}%%
  \begin{subfigure}[b]{0.3\linewidth}
    \centering
    \includegraphics[width=\linewidth]{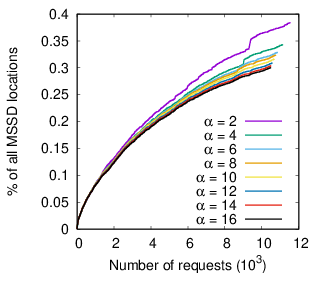}
    \caption{Google Places} 
   \label{fig:gpalpha}
   
  \end{subfigure} %% 
  \begin{subfigure}[b]{0.3\linewidth}
    \centering
    \includegraphics[width=\linewidth]{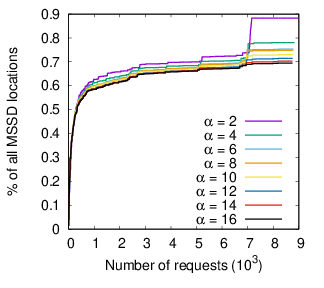}
      \caption{Foursquare} 
   \label{fig:fsqalpha}
   
  \end{subfigure}

  \begin{subfigure}[b]{0.3\linewidth}
    \centering
    \includegraphics[width=\linewidth]{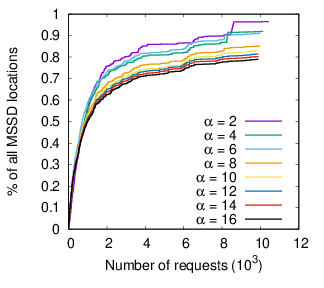}
      \caption{Flickr} 
   \label{fig:flickralpha}
   
  \end{subfigure} %% 
  \begin{subfigure}[b]{0.3\linewidth}
    \centering \includegraphics[width=\linewidth]{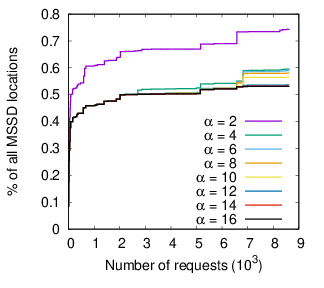}
      \caption{Twitter C} 
   \label{fig:twalpha}
   
  \end{subfigure}\begin{subfigure}[b]{0.3\linewidth}
    \centering \includegraphics[width=\linewidth]{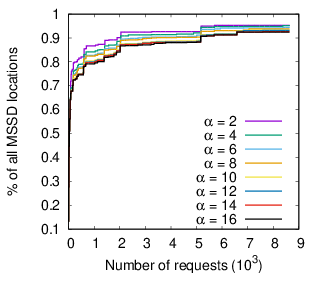}
      \caption{Twitter N} 
   \label{fig:ntwalpha}
   
  \end{subfigure} %%

  \caption{Requests vs locations for different $\alpha$ in  \opt\ } 
  \label{fig:optalpha}

\end{figure*}

\begin{figure*}[htb]
\centering
\begin{subfigure}[b]{0.3\linewidth}
    \centering
    \includegraphics[width=\linewidth]{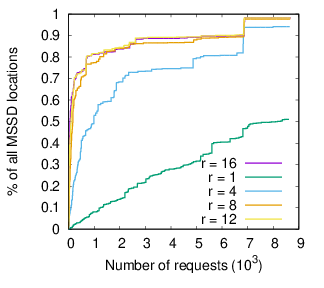}
    \caption{Yelp} 
    \label{fig:yelpradius}
    
  \end{subfigure}%%
  \begin{subfigure}[b]{0.3\linewidth}
    \centering
    \includegraphics[width=\linewidth]{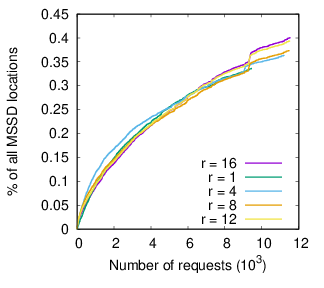}
    \caption{Google Places} 
   \label{fig:gpradius}
   
  \end{subfigure} %%
  \begin{subfigure}[b]{0.3\linewidth}
    \centering
    \includegraphics[width=\linewidth]{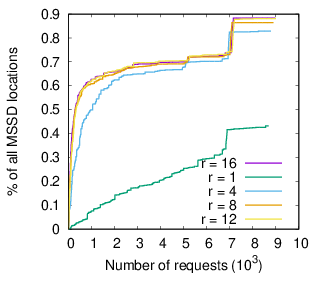}
      \caption{Foursquare} 
   \label{fig:fsqradius}
   
  \end{subfigure}
  
  \begin{subfigure}[b]{0.3\linewidth}
    \centering
    \includegraphics[width=\linewidth]{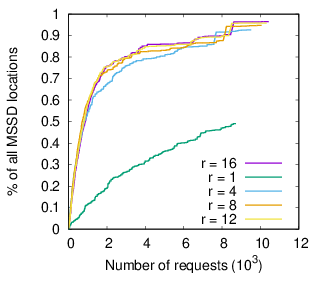}
      \caption{Flickr} 
   \label{fig:flickrradius}
   
  \end{subfigure} %% 
  \begin{subfigure}[b]{0.3\linewidth}
    \centering \includegraphics[width=\linewidth]{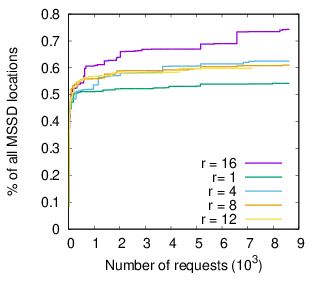}
      \caption{Twitter C} 
   \label{fig:twradius}
   
  \end{subfigure}\begin{subfigure}[b]{0.3\linewidth}
    \centering \includegraphics[width=\linewidth]{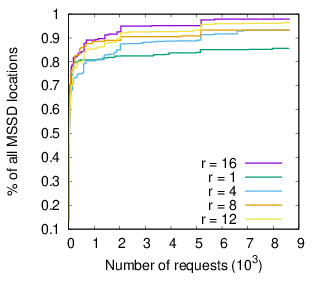}
      \caption{Twitter N} 
   \label{fig:ntwradius}
   
  \end{subfigure} %%

  \caption{Requests vs locations for different $r$ in  \opt\ } 
  \label{fig:optradius}

\end{figure*}

\subsection{Requests vs locations using several seeds}

In this subsection, we study the number of requests and the number of locations retrieved for each \seed\ algorithm in all sources using different seeds. Each graph in Fig. \ref{fig:allinalltradeoff} shows the number of requests in $10^2$ (x-axis) and the number of locations in $10^2$ (y-axis) using a specific seed (in the right bottom corner of the graph) for each source (in the caption of the figure); e.g.  the results from experimenting with Flickr having Krak as seed are presented in the first graph. The best position in this representation is the left-top corner, which means fewer requests and more locations. It is obvious that \opt\ has saved this corner in most of the cases. Some remarkable results are noted when experimenting with Flickr using Yelp as a seed, where \opt\ gets twice the locations of \rec\ with  30\% for \rec\ requests. For Yelp using Flickr as seed and for Foursquare using GP as seed, \opt\ retrieves twice the locations of \rec\ with only 15\% of \rec\ requests. In Twitter, \opt\ is quite economic, using always less than 6\% of \rec\ requests and retrieving 96\% -170\% of its data. Google Places showed similar results for all seeds, except with Foursquare, where \opt\ is very careful with the requests (less than 10\%) but it retrieves 30\%-60\% of the locations of rec. For Google Places with Foursquare as seed, \opt\ retrieves 130\% of rec locations with only 8\% of the requests.

\begin{figure*} [htb]

\begin{subfigure}[b]{\linewidth}

   \includegraphics[width=\linewidth]{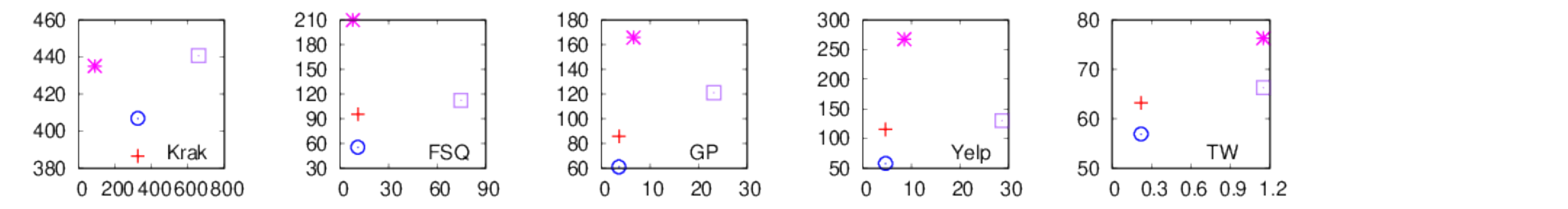}
  \caption{Flickr} 
   \label{fig:flickrss}
    
  \end{subfigure}%%

  \begin{subfigure}[b]{\linewidth}

   \includegraphics[width=\linewidth]{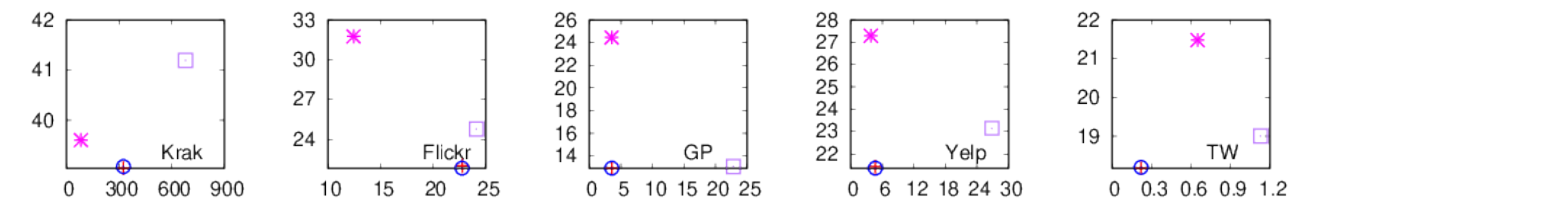}
   \caption{Foursquare} 
   \label{fig:fsqss}
    
  \end{subfigure}%%

  \begin{subfigure}[b]{\linewidth}
    \centering
   \includegraphics[width=\linewidth]{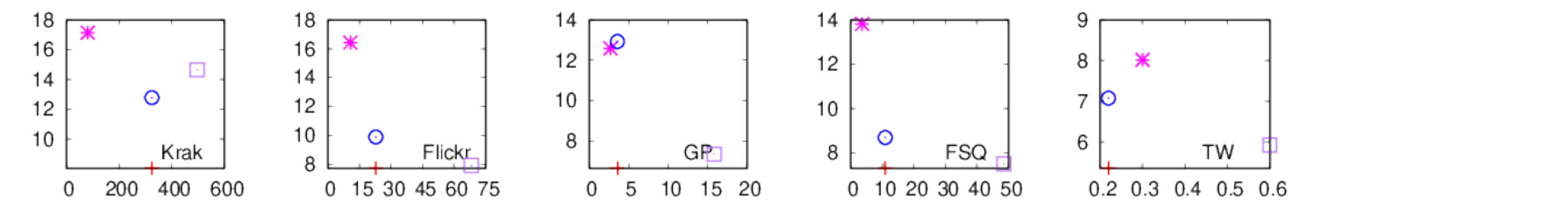}
   \caption{Yelp} 
   \label{fig:yelpss}
    
  \end{subfigure}%%
  
    \begin{subfigure}[b]{\linewidth}
    \centering
   \includegraphics[width=\linewidth]{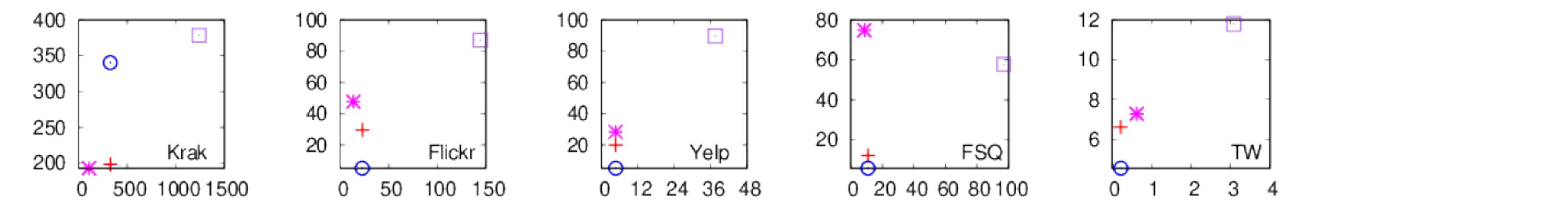}
   \caption{Google Places} 
   \label{fig:gpss}
    
  \end{subfigure}%%
  
    \begin{subfigure}[b]{\linewidth}
    \centering
   \includegraphics[width=\linewidth]{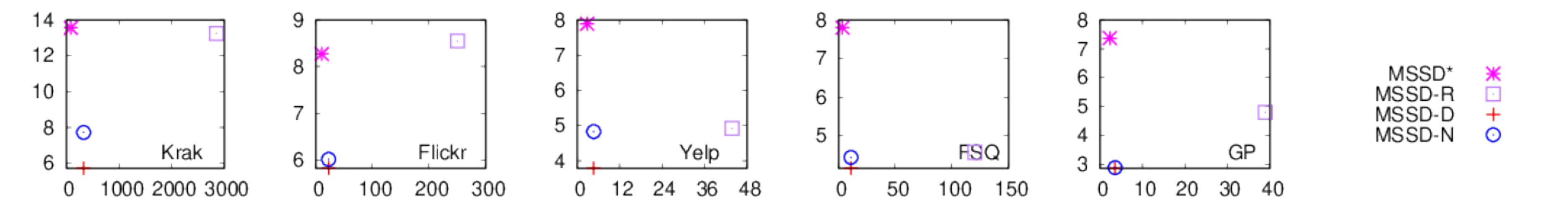}
   \caption{Twitter} 
   \label{fig:twss}
    
  \end{subfigure}%%
  
  \caption{Number of request  $10^2$ (x-axis) versus number of locations $10^2$ (y-axis) for all \seed\ algorithms resulting from experiments with (a)Flickr, (b)Foursquare, (c)Yelp, (d)Google Places and (e)Twitter using different seeds (noted in the graphs)}
  \label{fig:allinalltradeoff}

\end{figure*}

\end{document}